\documentclass[conference,letterpaper]{IEEEtran}
\usepackage[top=0.73in,bottom=1.1in,left=0.637in,right=0.64in]{geometry}
\IEEEoverridecommandlockouts
\usepackage[utf8]{inputenc}
\usepackage[T1]{fontenc}
\usepackage{url}
\usepackage{ifthen}
\usepackage{cite}
\usepackage[cmex10]{amsmath}
\usepackage{mathtools}
\usepackage{bbm}
\usepackage{graphicx}
\usepackage{booktabs}
\usepackage{dsfont}
\usepackage{amssymb}
\usepackage{amsthm}
\usepackage{amsfonts}
\usepackage{hyperref}
\usepackage[hyphenbreaks]{breakurl}
\usepackage{cleveref}
\usepackage{xcolor}
\usepackage{enumitem}

\definecolor{rc}{RGB}{139,0,0}   

\newcommand{\AUT}{\operatorname{AUT}}
\newcommand{\Ham}{\operatorname{w_H}}
\newcommand{\dH}{\operatorname{d_H}}
\newcommand{\DB}{\operatorname{DB}}
\newcommand{\B}{\operatorname{B}}
\newcommand{\RM}{\operatorname{RM}}
\newcommand{\supp}{\operatorname{supp}}
\newcommand{\vect}[1]{\underline{#1}}
\newcommand{\AGL}{\operatorname{AGL}}

\newenvironment{talign*}
 {\csname align*\endcsname}
 {\endalign}

\date{}

\newtheorem{lemma}{Lemma}
\newtheorem{observation}{Observation}
\newtheorem{corollary}{Corollary}

\newtheorem{definition}{Definition}
\newtheorem{theorem}{Theorem}

\newtheorem{note}{Note}

\title{On the Automorphism Groups of Berman Codes and associated Abelian Codes}
\author{Harshvardhan Pandey \and Prasad Krishnan
\thanks{Pandey and Krishnan are with the Signal Processing and Communications Research
Center, International Institute of Information Technology, Hyderabad 500032, India
(emails: \{harshvardhan.pandey@research., prasad.krishnan@\}iiit.ac.in).
\textit{Acknowledgement:} The authors would like to thank Lakshmi Prasad Natarajan
for fruitful discussions on this work.}}

\begin{document}
\maketitle
\vspace{-1.5cm}

\begin{abstract}
The automorphism group of a code is the group of permutations that map a code to
itself. Berman codes are a class of binary linear codes characterized by two integer
parameters $n\geq 2$ and $m\geq 1$, and this class includes the Reed-Muller codes
as well. The class of Berman codes and their duals were recently shown to achieve
the capacity of the binary erasure channel.
A number of abelian codes that arise from the intersection and subspace sums of
Berman and Dual Berman codes were also identified recently, for odd $n\geq 3$. A
subclass of these abelian codes was shown to have good short block-length performance
for AWGN channels, with efficient decoding algorithms. In this work, we identify the
exact automorphism group for Berman codes and their duals. Further, we find the
exact automorphism group for the abovementioned abelian codes, when $n\geq 5$. In
the case of such abelian codes with $n=3$, we present partial characterizations of
the automorphism groups for a large collection of parameter choices, and complete
characterizations for a few.
\end{abstract}

\section{Introduction}

Reed-Muller (RM) codes \cite{Muller1954RMCodes, Reed1954RMCodes} are an important
class of algebraic codes which have been extensively studied in coding theory.
The automorphism group of the code ${\cal C}$ of block length $N$ is the subgroup of $S_N$ (the group
of all permutations of $N$ items) which preserves the code,
i.e., $\pi \in \AUT(\cal C)$ if and only if $\left \{\pi(\vect{c}):
~\vect{c} \in \cal C \right\} = \cal C$. It is well known that any linear code
${\cal C}$ and its dual ${\cal C}^\perp$ share the same automorphism group. For
$1 \le r \le m-2$, the automorphism group of RM codes is isomorphic to the affine
general linear group $\AGL(m, 2)$, and for $r \in \{0, m-1, m\}$, the automorphism
group is $S_{2^m}$. Using the fact that the automorphism group is doubly transitive,
RM codes were shown to achieve capacity on the binary erasure channel
\cite{RM_BEC_Capacity}. It was later shown \cite{RM_BMS_Capacity} that RM codes
achieve capacity on any binary memoryless symmetric channel using the automorphism
group and some additional properties. Decoding algorithms that take advantage of
the large automorphism group of RM codes have also been proposed
\cite{automorphism_ensemble_decoding, constituent_automorphism_decoding} which
perform close to the maximum likelihood decoder, while also being computationally
tractable.

Berman introduced a class of codes of block length $p^m$, for any odd prime $p$,  which can be studied as ideals of a group algebra \cite{berman1967semisimple}. Blackmore and Norton studied
a class of codes of block length $n^m$, for all $n \ge 2$
\cite{blackmoreDualBerman}. These codes strictly contain RM codes (when $n=2$), as
well as the duals of Berman codes. Later, Natarajan and Krishnan showed that Berman
codes and their duals also achieve capacity on the binary erasure channel
\cite{natarajan2023bermancodesgeneralizationreedmuller}. Further, a large class of
abelian codes of length $n^m$ (for odd $n$) was also presented in this work. An
abelian code is an ideal in a group algebra. In the case of
\cite{natarajan2023bermancodesgeneralizationreedmuller}, the group algebra may be
taken as ${\mathbb Z}_n^m$. Using the finite-field discrete fourier transform (DFT)
characterization \cite{10.1109/18.165458} of abelian codes, various classes of
binary abelian codes, defined by the integers $n$ (odd), $m$, and a subset
${\cal W}\subseteq \{0,\hdots,m\}$ denoting the collection of non-zero ``frequencies'' of the DFT of the code, were presented. These abelian codes are closed under
duality. Berman codes and their duals can also be viewed from this standpoint when
$n$ is odd. Some of these classes were also shown to achieve the binary erasure
channel capacity. This was done partly by identifying a subgroup of the automorphism
group of those codes.

The automorphism group of only a few other classes of codes are completely known.
For example, the automorphism group of Hamming codes is known to be the general
linear group with its natural action on non-zero vectors. This follows from the fact
that we can view Hamming codes as punctured RM codes. Another important example is
that of Golay codes. The automorphism group of the perfect binary Golay code
$\mathcal{G}_{23}$ is the Mathieu group $M_{23}$, while that of the extended binary
Golay code $\mathcal{G}_{24}$ is $M_{24}$
\cite[Chapter \,20, Corollary \,6\,\&8]{MacWilliams1977}. Among non-binary codes, automorphisms have been studied for Cauchy Reed Solomon codes
\cite{ReedSolomonAutomorphism}, generalized RM codes \cite{GRM_Automorphism} and
the ternary Golay code \cite[Chapter ~20, Theorem 18]{MacWilliams1977}. In some
cases only subgroups of the automorphism group are known. For example, for polar
codes Geiselhart et al.~\cite{Gei21} showed that the automorphism group contains
the block lower-triangular affine group (BLTA). To the best of our knowledge, the
full automorphism group of polar codes is not known in general.

In this work we study the automorphism groups of dual Berman codes (and by duality,
Berman codes) by reduction to graph automorphism. Furthermore using results in
maximal subgroups of the symmetric group \cite{LIEBECK1987365,
10.1093/qmath/37.4.419} we present various results regarding the automorphism of
large classes of related abelian codes. We now summarize our contributions and
organization of the paper.
\begin{itemize}[leftmargin=*]
    \item \textit{\Cref{sec:group_theory}:} We define and discuss some of the group
    theoretic preliminaries required for the paper.
    \item \textit{\Cref{sec:mainresults}:} We give a brief overview about Berman
    codes and their duals, along with some properties. We further define the abelian
    codes studied in this work other than Berman/Dual Berman codes. This is followed
    by the main results of the paper, given by Theorems
    \ref{thm:maintheoremdualberman}~--~\ref{thm:n=3,singleton-weight-set}.
    \item \textit{\Cref{sec:proofofAutofBerman}:} We prove
    \Cref{thm:maintheoremdualberman} in this section. The theorem precisely
    characterizes the complete automorphism group of all Berman codes with $n > 2$,
    i.e., the non RM cases.
    \item \textit{\Cref{sec:proofofTheoremAUTn>=5}:} \Cref{thm:n>=5AUT} is proved
    in this section. The associated abelian codes studied in the paper have a
    parameter $n \ge 3$ which is an odd integer. Using the previous result of Berman
    codes and existing literature about maximal subgroups of the symmetric group,
    \Cref{thm:n>=5AUT} characterizes the complete automorphism group of all abelian
    codes in the class we consider in this work, having $n \ge 5$.
    \item \textit{\Cref{sec:n=3}:} This section looks at our considered abelian
    codes with $n=3$. The $n=3$ case turns out to be far more involved than the
    prior two scenarios. We present partial characterizations of the automorphism
    group for these abelian codes for a large number of parameter choices, and
    complete characterization in a few select settings. Specifically, we prove
    \Cref{theorem:affine_type_permutation}, which states that the automorphism group
    is a subgroup of the affine general linear group. We also prove Theorem
    \ref{thm:n=3,singleton-weight-set}, which looks at the special case when the
    allowed weight set corresponding with the abelian code is a singleton set, i.e.,
    permitting only a single weight for the finite-field DFT of the code.
\end{itemize}
Our techniques in this work exploit the rich collection of results involving the
maximal subgroups of the permutation group. We present various group-theoretic
properties of such subgroups along the way to characterize the automorphisms of our
codes. We believe that this approach presents a potential foundation for analyzing
the automorphism group of many other abelian codes as well.

\emph{Notation and basic terminology:} For any positive integer $n$, let $[n]$
denote the set $\{0,1,\dots,n-1\}$. For a subset $A\subseteq B$, its complement
$B\setminus A$ is denoted by $\overline{A}$. Let $\mathbb{Z}_{n}$ be the ring where
addition and multiplication are performed modulo $n$.
$\mathbb{F}_q$ denotes the field with $q$ elements. The identity matrix of order
$n$ is denoted by $I_n$. The empty set is $\emptyset$. Vectors are denoted with an
underline, and considered as column vectors unless otherwise stated. For any vector
$\vect{u}$, $\supp(\vect{u})$ is the support set of $\vect{u}$. The Hamming weight
of a vector ${\vect{u}}$ is denoted by $\Ham({\vect{u}})$. The all ones vector of length $n$ is denoted by $\mathbbm{1}_n$. The Hamming distance
between vectors $\vect{u}$, and $\vect{v}$ is denoted by $\dH(\vect{u}, \vect{v})$.
$S_n$ denotes the group of all permutations of a collection of $n$ items, under the
operation of composition. A linear code $\mathcal{C}$ of length $n$ over
$\mathbb{F}_q$ is a subspace of $\mathbb{F}_q^n$. The automorphism group of
$\mathcal{C}$, denoted by $\AUT(\mathcal{C})$, is the subgroup of $S_n$ which
preserves the code. The dual code of ${\cal C}$ is given by ${\cal
C}^\perp=\{\underline{c_1}\in{\mathbb F}_q^n:\underline{c_1}^T\underline{c}=0,
\forall \underline{c}\in{\cal C}\}$.

\section{Group Theory Preliminaries}
\label{sec:group_theory}

This section provides a brief overview of some of the group theoretic ideas we will
be using. We refer the reader to \cite{dixon1996permutation} for a comprehensive
treatment. Let $G$ be a group. The notation $H \le G$ denotes that $H$ is a
subgroup of $G$, and $H < G$ indicates that $H$ is a proper subgroup. If $H$ is a
normal subgroup of $G$, it is denoted by $H \unlhd G$. We will call a group trivial
if it contains only one element (the identity).

\begin{definition}[Index of a Subgroup]
Let $H \le G$. The left cosets of $H$ in $G$ are sets of the form $gH = \{gh : h
\in H\}$ for $g \in G$. The right cosets of $H$ are similarly defined, and denoted
as $Hg, g\in G$. The index of $H$ in $G$, denoted by $[G : H]$, is the number of
distinct left (or right) cosets of $H$ in $G$. By Lagrange's Theorem, for finite
groups, $|G| = [G : H] |H|$.
\end{definition}

\begin{definition}[Group Action]
Let $G$ be a group with identity element $e$, and let $\Omega$ be a set. A (left)
group action of $G$ on $\Omega$ is a function from $G \times \Omega$ to $\Omega$,
typically denoted by $(g, \omega) \mapsto g ( \omega)$, that satisfies two
fundamental axioms. First, the identity axiom states that $e ( \omega) = \omega$
for all $\omega \in \Omega$. Second, the compatibility axiom requires that $g \left(
h (\omega) \right) = (gh)( \omega)$ for all $g, h \in G$ and $\omega \in \Omega$.
For $\Delta\subseteq \Omega$, the collection of its images under the action by
$g\in G$ is denoted as $g(\Delta)$.
\end{definition}

\begin{definition}[Faithful Group Action]
An action of a group $G$ on a set $\Omega$ is called faithful, if the only element
of $G$ that fixes every element in $\Omega$ is the identity element $e$.
\end{definition}

In this work, we study the permutation automorphisms of binary linear codes. Since
the natural action of the symmetric group $S_N$ on a set $\Omega$ of size $N$ is
faithful, any subgroup of $S_N$ would also act faithfully on $\Omega$. Thus, we
will only look at faithful group actions.

\begin{definition}[Orbit]
Let a group $G$ act on a set $\Omega$. The orbit of an element $\omega \in \Omega$
under this action, denoted by $G ( \omega)$, is the set of all possible destinations
of $\omega$ under the action of $G$. Formally, $G ( \omega) = \{g ( \omega) : g \in
G\}$. The orbits of elements of $\Omega$ form a partition of the set $\Omega$.
\end{definition}

\begin{definition}[Symmetric and Alternating Groups]
The symmetric group on a set of $n$ elements, denoted by $S_n$, is the group of
all permutations of that set under function composition. The sign of a permutation
$\sigma \in S_n$, denoted $\operatorname{sgn}(\sigma)$, is $+1$ if $\sigma$ can be
expressed as a product of an even number of transpositions (swap operations), and
$-1$ if it requires an odd number. The set of all permutations with a sign of $+1$
forms a normal subgroup of $S_n$ called the alternating group, denoted by $A_n$,
which has order $|S_n|/2$, for $n \ge 2$.
\end{definition}

\begin{definition}[Transitivity]
A group action of $G$ on $\Omega$ is transitive if, for any pair of elements
$\alpha, \beta \in \Omega$, there exists some $g \in G$ such that $g(\alpha) =
\beta$. It is doubly transitive if any ordered pair of distinct elements can be
mapped to any other ordered pair of distinct elements. The stabilizer of an element
$\omega \in \Omega$ is the subgroup $G_\omega = \{g \in G : g(\omega) = \omega\}$.
\end{definition}

\begin{definition}[Primitivity]
For a group $G$ acting on a set $\Omega$, a subset $\Delta \subseteq \Omega$ is
called a block under $G$ if for every $g \in G$, either $g(\Delta) = \Delta$ or
$g(\Delta) \cap \Delta = \emptyset$. The sets $\Omega$, the empty set, and singleton
sets are trivial blocks. A transitive permutation group is said to be primitive if
it preserves no non-trivial blocks.
\end{definition}

\begin{definition}[Automorphisms of a Group]
An automorphism of $G$ is a bijective function $\alpha : G \to G$ such that for all
$a, b \in G$, $\alpha(ab) = \alpha(a)\alpha(b)$. The set of all automorphisms of
$G$ is denoted by $\AUT(G)$ which is itself a group under  the composition operation.
\end{definition}

\begin{definition}[Semidirect Product]
Let $N$ and $H$ be groups, and let $\varphi : H \to \AUT(N)$ be a group
homomorphism. The semidirect product $N \rtimes_{\varphi} H$ is a group constructed
on the Cartesian product $N \times H$ with the group operation defined by $(n_1,
h_1)(n_2, h_2) = (n_1 \varphi(h_1)(n_2), h_1 h_2)$.
\end{definition}

\begin{definition}[Wreath Product]
Let $G$ be a group and $H \le S_m$ be a permutation group. The wreath product of
$G$ by $H$, denoted $G \wr H$, is the semidirect product $G^m \rtimes_{\varphi} H$,
where the base group is the $m$-fold direct product $G^m$, and the homomorphism
$\varphi : H \to \operatorname{Aut}(G^m)$ is the coordinate permutation action given
by $\varphi(h)(g_1, \dots, g_m) = (g_{h^{-1}(1)}, \dots, g_{h^{-1}(m)})$.
\end{definition}

\begin{definition}[Product Action of the Wreath Product]
The product action of the wreath product $S_n \wr S_m$ on the set $[n]^m$ is
defined as follows. An element of $S_n \wr S_m$ is given by a tuple $((\sigma_1,
\dots, \sigma_m), \pi)$, where $\sigma_i \in S_n$ and $\pi \in S_m$. Its action on
a vector $\vect{x} = (x_1, \dots, x_m) \in [n]^m$ is given by $((\sigma_1, \dots, \sigma_m), \pi) \cdot (x_1, \dots, x_m) = (\sigma_1(x_{\pi^{-1}(1)}), \dots, \sigma_m(x_{\pi^{-1}(m)}))$.
\end{definition}

\begin{definition}[Affine General Linear Group]
Let $\mathbb{F}_q$ denote the finite field of $q$ elements. The general linear
group, $\operatorname{GL}(m, q)$, is the group of all invertible $m \times m$
matrices over $\mathbb{F}_q$. The affine general linear group,
$\operatorname{AGL}(m, q)$, consists of all affine transformations of the vector
space $\mathbb{F}_q^m$. The elements of $\operatorname{AGL}(m, q)$ are mappings of
the form $\vect{x} \mapsto A\vect{x} + \vect{b}$, where $A \in \operatorname{GL}(m,
q)$ and $\vect{b} \in \mathbb{F}_q^m$.
\end{definition}

\begin{definition}[Simple Group]
A group $G$ is called simple if it is non-trivial and its only normal subgroups are
the trivial group and the group $G$ itself.
\end{definition}

\begin{definition}[Almost Simple Group]
\label{defn:almostsimple}
A group $G$ is said to be almost simple if there exists a non-abelian simple group
$S$ such that $S \triangleleft G \leq \AUT(S)$, where $\AUT(S)$ is the automorphism
group of $S$.
\end{definition}

\begin{definition}[Solvable Group]
A group $G$ is solvable if there exists a finite sequence of subgroups $\{e\} = G_0
\triangleleft G_1 \triangleleft \dots \triangleleft G_k = G$ such that for each $i
\in [k]$ the quotient group $G_{i+1}/G_i$ is an abelian group.
\end{definition}

Another important class of permutation groups are \textit{groups of diagonal type}.
The construction of these groups require developing some framework which is outside
the scope of this work. We refer the reader to \cite[Section ~4.5]{dixon1996permutation} for a more in depth review. For the
purpose of this work, we will only need one property of groups of diagonal type. If
such a group acts primitively on a set $\Omega$, then $|\Omega| = |T|^{l-1}$ for
some non-abelian simple group $T$ \cite[Theorem ~4.1A(b)(ii)]{dixon1996permutation}.
We will now present a useful lemma. We include a proof for the sake of completeness.

\begin{lemma}
    \label{lemma:normal-subgroup-of-primitive-group-is-transitive}
    If $G$ acts primitively and faithfully on $\Omega$, and $H \unlhd G$ such that
    $|H| > 1$, then $H$ acts transitively on $\Omega$.
\end{lemma}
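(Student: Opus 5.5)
The plan is to show that the orbits of $H$ form a system of blocks for $G$, and then use primitivity to rule out every option except transitivity. I fix an element $\omega \in \Omega$ and set $\Delta = H(\omega)$, the $H$-orbit of $\omega$. The claim to establish is that $\Delta$ is a block for $G$. For any $g \in G$, normality gives
\[
g(\Delta) = gH(\omega) = Hg(\omega) = H(g(\omega)),
\]
so $g(\Delta)$ is again an $H$-orbit. Since the $H$-orbits partition $\Omega$, either $g(\Delta)=\Delta$ or $g(\Delta)\cap\Delta=\emptyset$. Hence $\Delta$ is a block under $G$.

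By primitivity, $\Delta$ is a trivial block. It is nonempty, so either $\Delta=\Omega$ or $\Delta$ is a singleton. If $\Delta=\Omega$, then $H$ is transitive and we are done.

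Otherwise $\Delta=\{\omega\}$, meaning $H$ fixes $\omega$. I would then use the transitivity of $G$, which is part of the definition of a primitive group. For any $\alpha\in\Omega$, choose $g\in G$ with $g(\omega)=\alpha$. Then $H(\alpha)=H(g(\omega))=g(H(\omega))=\{\alpha\}$, so $H$ fixes every point of $\Omega$. Because $G$ acts faithfully, every element of $H$ must be the identity, which contradicts $|H|>1$. Therefore $\Delta=\Omega$ and $H$ is transitive.

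No step here is difficult. The only point needing care is the identity $gH(\omega)=H(g(\omega))$, which follows directly from $gH=Hg$. One degenerate case should be noted: if $|\Omega|=1$, the only faithful action is by the trivial group, which is excluded by $|H|>1$. So in every case the dichotomy between ``whole set'' and ``singleton'' applies cleanly.
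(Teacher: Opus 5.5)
Your proof is correct and follows essentially the same route as the paper: you use normality to show that the $H$-orbit $\Delta$ is a block for $G$, invoke primitivity to get $\Delta=\Omega$ or a singleton, and rule out the singleton case by faithfulness together with $|H|>1$. You also use the transitivity of $G$ to pass from a singleton orbit at one point to singleton orbits at every point of $\Omega$, which makes explicit a step the paper asserts without justification.
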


\begin{proof}
    First, we note that $|\Omega|\geq 2$ since $G$ acts faithfully. Let
    $\Delta_\alpha=\{h\alpha \mid h \in H\}$ for some $\alpha \in \Omega$. We will
    show that $\Delta_\alpha$ is a block for the action of $G$ on $\Omega$. For any
    $g \in G$, we have $g(\Delta_\alpha)=\{(gh)(\alpha) \mid h \in H\}$. Since $H
    \unlhd G$, for every $h \in H$, there exists some $h' \in H$ such that $gh=h'g$.
    Thus, we have $g(\Delta_\alpha)=\{(h'g)\alpha \mid h' \in H\}$. Thus,
    $g(\Delta_\alpha)$ is exactly the orbit of $g(\alpha)$ under $H$. Because the
    orbits of $H$ form a partition of $\Omega$, the orbit $g(\Delta_\alpha)$ must
    either be identical to $\Delta_\alpha$ or completely disjoint from it. Therefore,
    $\Delta_\alpha$ is a block under $G$. Since $G$ acts primitively on $\Omega$,
    the only blocks are trivial. Either $|\Delta_\alpha|=1$ or $\Delta_\alpha=\Omega$.
    If $|\Delta_\alpha|=1$ for any $\alpha \in \Omega$, then $|\Delta_\beta|=1 \,,
    \forall \beta \in \Omega$. This means every element of $H$ fixes every element
    of $\Omega$. This is not possible since $|H| > 1$, and action of $G$ is faithful.
    It follows that $\Omega$ has exactly one orbit under $H$, i.e., we have
    $\Delta_\alpha=\Omega, \forall \alpha$. Therefore, $H$ acts transitively on
    $\Omega$.
\end{proof}

\section{Relevant Literature Review and Main Results}
\label{sec:mainresults}
In this section, we recollect the formal definitions for Berman codes and their duals. Further, we also present the definitions for the related Abelian codes studied in this work. This is followed by the main theorems and their proof overviews. 
\subsection{Review of Berman Codes, their duals, and related abelian codes}
\label{subsec:reviewofBermanANDBiDcodes}

\subsubsection{Berman codes and their duals \cite{berman1967semisimple,natarajan2023bermancodesgeneralizationreedmuller}}
\label{subsubsec:bermancodesandduals}

For an integer $n \ge 2$ and $m \ge 1$, Berman codes of order $m$ and $0$ are
given by $\B_n(m,m) \triangleq \{{\vect{0}}\in \mathbb{F}_2^{n^m}\}$ and
$\B_n(0,m) \triangleq \{{\vect{c}}\in \mathbb{F}_2^{n^m}\colon \sum_i c_i=0\}$
respectively. For $1 \le r \le m-1$,
\begin{align*}
    \B_n(r,m) &\triangleq \{ ({\vect{v}}_0|{\vect{v}}_1|\hdots|{\vect{v}}_{n-1})
    \colon {\vect{v}}_l\in \B_n(r-1,m-1),\\
    &\qquad \sum_{l\in [n]}{{\vect{v}}_l}\in \B_n(r,m-1) \}.
\end{align*}
For an integer $n \ge 2$ and $m \ge 1$, Dual Berman codes of order $m$ and $0$ are
given by $\DB_n(m,m) \triangleq \mathbb{F}_2^{n^m}$ and $\DB_n(0,m) =
\{(c,\hdots,c)\in \mathbb{F}_2^{n^m}\colon c\in \mathbb{F}_2\}$ respectively. For
$1 \le r \le m-1$,
\begin{align*}
    \DB_n(r,m) &\triangleq \{
    ({\vect{u}}+{\vect{u}}_0|{\vect{u}}+{\vect{u}}_1|\hdots|{\vect{u}}+{\vect{u}}_{n-2}|{\vect{u}})
    \colon \\
    &\qquad {\vect{u}}_l\in \DB_n(r-1,m-1), \forall l \in [n-1],\\
    &\qquad {\vect{u}}\in \DB_n(r,m-1)\}.
\end{align*}
Note that these constructions can be seen as a generalization of the Plotkin
construction of RM codes. The codes $\DB_n(r,m)$ and $\operatorname{B}_n(r,m)$ are
known \cite{natarajan2023bermancodesgeneralizationreedmuller} to be duals of each
other.
$\DB_n(r,m)$ can also be constructed using recursive subproducts \cite{siddheshwar2024recursivesubproductcodesreedmullerlike}.

The automorphism group $\AUT({\cal C})$ of an $n^m$-length code $\mathcal{C}$ is
the subgroup of $S_{n^m}$ which preserves the code, i.e. if $\pi$ is an
automorphism of $\mathcal{C}$, then $\pi \left(\mathcal{C}\right) =
\left\{\pi(\vect{c}): \vect{c} \in \mathcal{C}\right\} = \mathcal{C}$. Note that
the automorphism group of a linear code ${\cal C}\subseteq {\mathbb F}_q^n$ is the
same as that of its dual code.
When $n = 2$, $\DB_2(r,m) = \RM(r,m)$. It is known that $\AUT(\RM(r,m)) =
\operatorname{AGL}(m,2)$ for $1 \le r \le m-2$, and $\AUT(\RM(r,m)) = S_{2^m}$ for
$r \in \{0, m-1, m\}$. It is also known
\cite[Theorem 14]{natarajan2023bermancodesgeneralizationreedmuller} that $S_n\wr
S_m\leq \AUT(\DB_n(r,m))= \AUT(\operatorname{B}_n(r,m))$.

For a linear code, a codeword $\underline{c}$ is called a minimum weight codeword
if $\Ham(\underline{c})$ is the minimum distance of the code. For $n > 2$, the
minimum weight codewords of $\DB_n(r, m)$ are given by the tensor product $\vect{u}
= \vect{u}_1 \otimes \vect{u}_2 \otimes \dots \otimes \vect{u}_m$, where $r$ of
the component vectors are standard basis vectors of $\mathbb{F}_2^n$, and the
remaining $m-r$ components are $\mathbbm{1}_n$
~\cite[Lemma~4]{siddheshwar2024recursivesubproductcodesreedmullerlike}. The
coordinates of the codewords are indexed by the set $[n]^m$.

\subsubsection{Berman-intersection-dual Berman (BiD) codes and related Abelian
Codes \cite{natarajan2023bermancodesgeneralizationreedmuller,
10.1109/18.165458,dash2025bidcodesalgebraiccodes}}
\label{subsubsec:abeliancodesreview}

For an odd integer $n \ge 3$ and $m \ge 1$, let $\mathcal{W} \subseteq \{0, 1,
\dots, m\}$ be a set of weights. We define the code $C(n, m, \mathcal{W})$ as an
ideal of the group algebra $\mathbb{F}_2[\mathbb{Z}_n^m]$ such that the DFT of a
codeword is supported on frequencies whose Hamming weight lies in $\mathcal{W}$.
Formally,
\[
\mathcal{C}(n, m, \mathcal{W}) = \left \{ \vect{a} \in \mathbb{F}_2[\mathbb{Z}_n^m] :
\widehat{a}_{\vect{j}} = 0 \text{ if } \operatorname{wt}_H(\vect{j}) \notin
\mathcal{W} \right \},
\]
where $\widehat{a}_{\vect{j}}$ denotes the $\vect{j}$-th component of
the DFT of $\vect{a}$. Such codes come under the class of abelian (group)
codes \cite{10.1109/18.165458}. It is easy to show that any $\mathcal{C}(n,m,
\mathcal{W})$ can be decomposed as
\[
\mathcal{C}(n,m,\mathcal{W}) = \bigoplus_{w \in \mathcal{W}} \mathcal{C}(n,m,\{w\}).
\]
The dual code of $\mathcal{C}(n,m,\mathcal{W})$ is
$\mathcal{C}(n,m,\overline{\mathcal{W}})$~\cite[Lemma~27]{natarajan2023bermancodesgeneralizationreedmuller}.

If ${\cal W}=\{0,1,\hdots,r\}$ for some $r\leq m$, then it can be seen
\cite{natarajan2023bermancodesgeneralizationreedmuller,
dash2025bidcodesalgebraiccodes} that $\mathcal{C}(n,m,{\cal W})=\DB_n(r,m)$, while
$\mathcal{C}(n,m,\overline{\cal W})=\operatorname{B}_n(r,m)$. Following this, if
${\cal W}=\{r_1,\hdots,r_2\}\subseteq[m+1]$ for some $r_1\leq r_2$, then we can
show \cite{natarajan2023bermancodesgeneralizationreedmuller,
dash2025bidcodesalgebraiccodes}, that $\mathcal{C}(n,m,{\cal
W})=\operatorname{B}_n(r_1-1,m)\cap \DB_n(r_2,m)$. Hence, when
${\cal W}=\{r_1,\hdots,r_2\}\subseteq[m+1]$, the code $\mathcal{C}(n,m,\mathcal{W})$
is termed as a
\textit{Berman-intersection-dual Berman (BiD) code}
\cite{dash2025bidcodesalgebraiccodes}. Properties of BiD codes with $n=3$ (and
thus of length $3^m$), their decoding algorithms, and comparisons with RM and Polar
codes were presented in \cite{dash2025bidcodesalgebraiccodes}. In general, for any
${\cal W}\subseteq [m+1]$, we have
\begin{align}
\label{eqn:decomposition}
    \mathcal{C}(n,m,\mathcal{W}) = \bigoplus_{w \in \mathcal{W}}
    \left(\DB_n(w, m) \cap \B_n(w-1,m)\right).
\end{align}

\begin{note}
    \label{note:trivial-codes-case}
    We shall refer to the following weight sets as \underline{boundary cases}.
    \begin{enumerate}
        \item $\mathcal{W} = \emptyset$, i.e., $\mathcal{C}(n, m, \mathcal{W}) =
        \{\vect{0}\}$.
        \item $\mathcal{W} = \{0\}$, i.e., $\mathcal{C}(n, m, \mathcal{W})$ is
        the repetition code of length $n^m$.
        \item $\mathcal{W} = \{1, 2, \dots, m\}$, i.e., $\mathcal{C}(n, m,
        \mathcal{W})$ is the single parity check code of length $n^m$.
        \item $\mathcal{W} = \{0, 1, \dots, m\}$ i.e., $\mathcal{C}(n, m,
        \mathcal{W}) = \mathbb{F}_2^{n^m}$.
    \end{enumerate}
    It is easy to verify that in these four cases $\AUT(\mathcal{C}(n, m, \mathcal{W})) = S_{n^m}$.
\end{note}

\subsection{Main Results}
\label{subsec:mainresults}

This work is concerned with the automorphism groups of the Berman codes and the abelian codes described in
Subsection \ref{subsec:reviewofBermanANDBiDcodes}. The main results of this work
are Theorems~\ref{thm:maintheoremdualberman}--\ref{thm:n=3,singleton-weight-set}, which we now present.

\begin{theorem}
    \label{thm:maintheoremdualberman}
    For $1 \le r \le m-1$ and $n\geq 3$, $\AUT \left(\DB_n(r, m)\right)
    =\AUT(\operatorname{B}_n(r,m))=S_n \wr S_m$. If $r \in \{0, m\}$, $\AUT
    \left(\DB_n(r, m)\right) =\AUT(\operatorname{B}_n(r,m))=S_{n^m}$.
\end{theorem}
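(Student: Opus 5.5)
The boundary cases are immediate. For $r=m$ the dual Berman code is $\mathbb{F}_2^{n^m}$; for $r=0$ it is the repetition code. Every permutation preserves both of these codes, and hence also their duals (the zero code and the single parity check code). So for $r\in\{0,m\}$ both automorphism groups are $S_{n^m}$. For $1\le r\le m-1$, the inclusion $S_n\wr S_m\le \AUT(\DB_n(r,m))$ is already known (Theorem 14 of the cited work). Since a code and its dual have the same automorphism group, it remains only to prove $\AUT(\DB_n(r,m))\le S_n\wr S_m$.

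For the reverse inclusion I will use the minimum weight codewords. Any $\pi\in\AUT(\DB_n(r,m))$ preserves Hamming weight, so it permutes the minimum weight codewords. For $n>2$ these are exactly $\vect{u}_1\otimes\cdots\otimes\vect{u}_m$ with $r$ factors standard basis vectors and $m-r$ factors $\mathbbm{1}_n$. The support of such a codeword is a ``subcube''
\[
\{\vect{x}\in[n]^m : x_i=a_i \ \forall i\in I\},\qquad |I|=r .
\]
So $\pi$ maps subcubes of codimension $r$ bijectively to subcubes of codimension $r$.

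From this family I will recover the Hamming graph $H(m,n)$ on $[n]^m$, whose automorphism group is classically $S_n\wr S_m$. Two vertices $\vect{x},\vect{y}$ are adjacent iff $\dH(\vect{x},\vect{y})=1$. This happens exactly when they lie in a common codimension-$r$ subcube whose free coordinate set contains the differing coordinate.

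To make the characterization permutation-invariant, I will count common subcubes. The number of codimension-$r$ subcubes containing both $\vect{x}$ and $\vect{y}$ is $\binom{m-d}{r}$, where $d=\dH(\vect{x},\vect{y})$: the fixed set $I$ must avoid the $d$ differing coordinates. Since $r\le m-1$, this count is strictly decreasing in $d$ on the range $0\le d\le m-r$, and it equals $0$ for $d>m-r$. In particular the count is maximal among distinct pairs exactly when $d=1$, where it equals $\binom{m-1}{r}\ge 1$.

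Because $\pi$ permutes the subcube family, it preserves this count for every pair. Hence $\pi$ preserves the pairs at distance $1$, that is, $\pi$ is an automorphism of $H(m,n)$. Therefore $\pi\in S_n\wr S_m$ acting in the product action.

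The main obstacle is the final step, identifying $\AUT(H(m,n))$ with $S_n\wr S_m$ in product action. I would cite this as a standard result; for completeness it can be sketched in three steps.
\begin{itemize}
\item The maximal cliques of $H(m,n)$ are the lines, i.e.\ the sets where all but one coordinate is fixed. These have size $n\ge3$, while any triangle lies in a line, so cliques are not ambiguous when $n\ge 3$.
\item Two lines through a vertex have the same ``direction'' iff they are parallel. Parallelism can be detected graph-theoretically: distinct lines $L,L'$ are parallel in direction $i$ iff they are disjoint and every point of $L$ is adjacent to exactly one point of $L'$. Thus an automorphism permutes the $m$ parallel classes, giving an element of $S_m$.
\item After composing with that element of $S_m$, the automorphism preserves each direction. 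Then each coordinate $x_i$ is determined by the class of hyperplanes $\{x_i=a\}$, which are the connected components after deleting direction-$i$ edges. The automorphism therefore acts coordinatewise by some $\sigma_i\in S_n$.
\end{itemize}
The case $n=2$ is exactly where maximal cliques fail to be lines; there the group is larger, consistent with RM codes having automorphism group $\AGL(m,2)$.
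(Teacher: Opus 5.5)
Your proof is correct, and it reaches the Hamming graph by a different route from the paper's. The paper proves that automorphisms map lines to lines. A line is an intersection of hypercubes in $\mathcal{H}_{m-r}$ (Lemma~\ref{lemma:how-to-get-small-hypercubes} and Corollary~\ref{lines-as-intersection}). Intersections of AA-hypercubes are AA-hypercubes or empty (Lemma~\ref{intersection-hypercubes}), and a bijection commutes with intersections. So $f(L)$ is an AA-hypercube of size $n$, that is, a line, and adjacency is read off as lying on a common line. You instead use a numerical invariant: the number $\binom{m-d}{r}$ of codimension-$r$ subcubes through a pair at distance $d$. Any permutation of the subcube family preserves this count, and because $1\le r\le m-1$ it singles out $d=1$ among distinct pairs. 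Your route bypasses the intersection-closure lemma and the recursive construction of lines. It also gives a little more: the count is strictly decreasing on $0\le d\le m-r$, so $\pi$ preserves every distance up to $m-r$. The paper's route gives the more geometric conclusion that the line structure itself is preserved. Both arguments then rest on $\AUT(G_H)=S_n\wr S_m$, which the paper simply cites (Lemma~\ref{hamming-graph-automorphism}).

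Two points in your optional sketch of $\AUT(G_H)$ should be corrected, though neither matters if you cite that result. First, your parallelism criterion is not right as stated. Two distinct parallel lines whose fixed coordinates differ in two or more places have no edges between them. So ``disjoint and every point of $L$ adjacent to exactly one point of $L'$'' detects only parallel lines whose fixed parts are at distance one. Non-parallel disjoint lines never satisfy it, since at most one point of $L$ has a neighbour in $L'$. Take the equivalence relation generated by this relation instead. On the lines of a fixed direction it is a copy of the Hamming graph on $[n]^{m-1}$, hence connected, so its classes are exactly the $m$ parallel classes. Second, your closing remark about $n=2$ is wrong. For $n=2$ the maximal cliques of the hypercube are its edges, which are exactly the lines, and $\AUT(G_H)=S_2\wr S_m$ still holds. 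What fails for $\RM(r,m)$ is the earlier step: its minimum weight codewords are the indicators of all $(m-r)$-dimensional affine subspaces of $\mathbb{F}_2^m$, not only the axis-aligned ones. Consequently an automorphism need not preserve the axis-aligned subcube family, and so need not preserve the Hamming graph.
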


We prove Theorem \ref{thm:maintheoremdualberman} in Section
\ref{sec:proofofAutofBerman}, by characterizing the necessary properties of
automorphism groups of Berman codes and their duals. Essentially, any
automorphism of the code $\DB_n(r,m)$ for $r\in[m-1]$ must be an automorphism of the Hamming
graph. Then, from a known result \cite{praeger2018permutation-cartesian-decomp}
regarding the automorphism group of the Hamming graph, we have Theorem
\ref{thm:maintheoremdualberman}.

\begin{theorem}
    \label{thm:n>=5AUT}
    For odd $n \ge 5$, $\AUT \left( \mathcal{C}(n, m, \mathcal{W}) \right) = S_n
    \wr S_m$, for any ${\cal W}\subseteq [m+1]$ such that ${\cal W}$ is not a
    boundary case (as in \Cref{note:trivial-codes-case}). For ${\cal W}$ being any
    boundary case, the automorphism group is $S_{n^m}$.
\end{theorem}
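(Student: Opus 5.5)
The plan is to prove two inclusions. First, $S_n \wr S_m \le \AUT(\mathcal{C}(n,m,\mathcal{W}))$ for every $\mathcal{W}$. Each summand $\DB_n(w,m)\cap \B_n(w-1,m)$ in \eqref{eqn:decomposition} is preserved by $S_n\wr S_m$, since by Theorem~\ref{thm:maintheoremdualberman} this group lies in the automorphism group of every Berman and dual Berman code. Intersections and sums of invariant subspaces are invariant, so $S_n\wr S_m \le \AUT(\mathcal{C}(n,m,\mathcal{W}))$. The boundary cases are handled by \Cref{note:trivial-codes-case}.

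For the reverse inclusion I would use that $S_n\wr S_m$ in product action on $[n]^m$ is a maximal subgroup of $S_{n^m}$ (or of $A_{n^m}$) for $n\ge 5$. This follows from the classification of maximal subgroups of symmetric and alternating groups \cite{LIEBECK1987365}. It is also exactly the range where product-action wreath products are maximal, which explains the hypothesis $n\ge 5$.

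With maximality, $G=\AUT(\mathcal{C}(n,m,\mathcal{W}))$ satisfies $S_n\wr S_m\le G\le S_{n^m}$, so $G$ is either $S_n\wr S_m$ or contains $A_{n^m}$. If $A_{n^m}\le G$, then $G$ is $2$-transitive, and in fact $(n^m-2)$-transitive. Since $n^m\ge 5$ is odd, $A_{n^m}$ acts transitively on subsets of each size. The code is therefore spanned by unions of full orbits of subsets, and the only $A_{n^m}$-invariant binary linear codes are $\{0\}$, the repetition code, the even-weight code and $\mathbb{F}_2^{n^m}$. To justify this, I would take a codeword $c\neq 0,\mathbbm{1}$ and a $3$-cycle moving one support point out of the support while fixing the rest. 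Adding the image to $c$ gives a weight-2 vector, and weight-2 vectors generate the even-weight code. By the dimension count $\dim \mathcal{C}(n,m,\{w\})=\binom{m}{w}(n-1)^w$, these four codes correspond exactly to the boundary weight sets. So for non-boundary $\mathcal{W}$, $G$ cannot contain $A_{n^m}$, and $G=S_n\wr S_m$.

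The main obstacle is pinning down the maximality of $S_n\wr S_m$ in $S_{n^m}$ for odd $n\ge 5$ rigorously. The Liebeck--Praeger--Saxl theorem lists the exceptions to maximality among the O'Nan--Scott classes, and I would need to check that product-action type $S_n\wr S_m$ with $n\ge5$ is not among them. This must hold for all $m\ge 1$: when $m=1$ the group is already $S_n$, so the claim is trivial. A further point is that for odd $n$ the group $S_n\wr S_m$ may lie in $A_{n^m}$ or not, so parity needs care. If maximality in $S_{n^m}$ fails in some parity case, I would use maximality in $A_{n^m}$ for $(S_n\wr S_m)\cap A_{n^m}$ instead. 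The overgroup argument still forces either $G\supseteq A_{n^m}$ or $G\cap A_{n^m}=(S_n\wr S_m)\cap A_{n^m}$. In the latter case I would conclude that $G$ normalizes this subgroup, and its normalizer in $S_{n^m}$ is $S_n\wr S_m$.
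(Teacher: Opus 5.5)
Your proposal is correct and follows essentially the paper's route: $S_n\wr S_m\le\AUT(\mathcal{C}(n,m,\mathcal{W}))$ via \eqref{eqn:decomposition}, then the overgroup theorem (Lemma~\ref{lemma:WreathProductMaximality}: every proper overgroup of $S_n\wr S_m$ in $S_{n^m}$ contains $A_{n^m}$), then the fact that only the four boundary codes have such a large automorphism group. Your 3-cycle classification of $A_{n^m}$-invariant codes lets you skip the paper's parity step (Lemma~\ref{lemma:odd-perm-exists}, which also settles your parity worry for odd $n$), and your dimension count makes explicit the match with the boundary weight sets that the paper takes for granted; one small fix is that if $\Ham(\vect{c})=n^m-1$ the 3-cycle must contain two support points, though $\vect{c}$ plus its image still has weight $2$.
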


Theorem \ref{thm:n>=5AUT} characterizes the automorphism group of Abelian codes
${\cal C}(n,m,{\cal W})$ for $n\geq 5$. We prove Theorem \ref{thm:n>=5AUT} in
Section \ref{sec:proofofTheoremAUTn>=5}, using group theoretic arguments. In
particular, we first show that $S_n \wr S_m$ is a subgroup of the automorphism
group of $\mathcal{C}(n, m, \mathcal{W})$. Now, $S_n \wr S_m$ is known from
literature to be a maximal subgroup of $S_{n^m}$ for odd $n \ge 5$. We then
identify the codes $\mathcal{C}(n, m, \mathcal{W})$ for which the automorphism group
is $S_{n^m}$, and the remaining cases have $S_n \wr S_m$ as their automorphism group.

\begin{theorem}
    \label{theorem:affine_type_permutation}
    If $\mathcal{W}$ is not one of the boundary cases (described in
    \Cref{note:trivial-codes-case}), then $\AUT(\mathcal{C}(3, m, \mathcal{W})) \le
    \operatorname{AGL}(m, 3)$, the affine general linear group of ${\mathbb F}_3^m$.
    If ${\cal W}$ is one of the boundary cases, then $\AUT(\mathcal{C}(3, m,
    \mathcal{W}))=S_{3^m}$.
\end{theorem}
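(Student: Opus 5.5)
The boundary cases are already handled by \Cref{note:trivial-codes-case}, so I will assume $\mathcal{W}$ is not a boundary case and write $G=\AUT(\mathcal{C}(3,m,\mathcal{W}))$. Replacing $\mathcal{W}$ by $\overline{\mathcal{W}}$ changes neither $G$ (by duality) nor the non-boundary condition, since the boundary list is closed under complement. The plan is to place $G$ between two known groups and then use the O'Nan--Scott classification of maximal subgroups of $S_{3^m}$ \cite{LIEBECK1987365,10.1093/qmath/37.4.419}.

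\textbf{Step 1: a lower bound.} First show $\AGL(m,3)\le G$. Translations by $\mathbb{Z}_3^m$ preserve the code because it is an ideal in $\mathbb{F}_2[\mathbb{Z}_3^m]$. Now take $A\in \mathrm{GL}(m,3)$ acting by $\vect{x}\mapsto A\vect{x}$. On the DFT side this acts on frequencies by $\vect{j}\mapsto A^{-T}\vect{j}$.

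Here the obstacle appears: $\mathrm{GL}(m,3)$ does \emph{not} preserve Hamming weights of frequencies. So $\AGL(m,3)\le G$ is false in general, and I instead aim for $G\le \AGL(m,3)$ directly, as the statement says.

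The substitute lower bound is $S_3\wr S_m\le G$. It holds because $S_3\wr S_m=\AUT(\DB_3(r,m))$ by \Cref{thm:maintheoremdualberman}, and the decomposition \eqref{eqn:decomposition} writes the code as a sum of intersections of Berman and dual Berman codes, each preserved by $S_3\wr S_m$. Note that $S_3\wr S_m$, in product action on $[3]^m=\mathbb{F}_3^m$, is itself contained in $\AGL(m,3)$, since $S_3=\AGL(1,3)$. Hence $G$ contains the translation group $T\cong\mathbb{Z}_3^m$.

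\textbf{Step 2: $G$ is primitive and not the whole symmetric group.} $S_3\wr S_m$ in product action is primitive only for $n\ge 5$, so for $n=3$ primitivity of $G$ has to be argued separately. My intended route is as follows.

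First, show $G\ne S_{3^m}$ and $G\ne A_{3^m}$. Since $A_{3^m}$ is $(3^m-2)$-transitive, any code it preserves is one of the four trivial codes, which are exactly the boundary cases.

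Second, show $G$ is primitive. Suppose $G$ preserved a nontrivial block system. Then its subgroup $S_3\wr S_m$ would preserve the same system, so it suffices to classify the block systems of $S_3\wr S_m$. Because it contains all translations and the stabilizer $S_2\wr S_m$, which acts as the monomial group $\{\pm1\}\wr S_m$, its blocks are cosets of subspaces of $\mathbb{F}_3^m$ invariant under signed permutations. The only such subspaces are $0$, $\langle\mathbbm{1}\rangle$, the sum-zero hyperplane, and $\mathbb{F}_3^m$. For the two intermediate ones I must exhibit an element of $G$ breaking the block system, or otherwise show that invariance forces $\mathcal{W}$ to be a boundary case. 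I expect this to be a main technical point; it may require using the minimum-weight structure of the code.

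\textbf{Step 3: O'Nan--Scott.} Now $G$ is primitive, proper and not alternating, and contains a regular elementary abelian subgroup $T$. Run through the O'Nan--Scott types of a maximal overgroup $M$ of $G$:
\begin{itemize}
\item affine type: $M=\AGL(k,p)$ with $p^k=3^m$, so $M\cong\AGL(m,3)$;
\item diagonal type: excluded, since $3^m\ne |T|^{l-1}$ for any non-abelian simple $T$;
\item wreath/product type $S_a\wr S_b$ with $a^b=3^m$;
\item almost simple type.
\end{itemize}
The hardest part is the last two types. I would use the result (Li / Guralnick) classifying primitive groups containing a regular abelian subgroup, which leaves essentially affine groups and product actions of such groups. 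To rule out the product-action case, I would show that $S_{3^k}\wr S_{m/k}$ with $k\ge 2$ contains permutations that break minimum-weight codewords of some component $\DB_3(w,m)\cap\B_3(w-1,m)$. With the other types excluded, $G\le \AGL(m,3)$ up to conjugation.

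Finally, the conjugating element can be taken to fix the standard affine structure. The reason is that the affine structure is determined by $T$, and $T\le G$ is already the standard translation group; the regular elementary abelian normal subgroup of the affine group is unique, which yields the conclusion.
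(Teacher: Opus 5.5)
Your lower bound $S_3\wr S_m\le G$ and your exclusion of the diagonal type match the paper. However, Step~2 rests on a false premise, and as written it never establishes primitivity. The product action of $S_3\wr S_m$ on $[3]^m$ \emph{is} primitive. By \cite[Lemma~2.7A]{dixon1996permutation}, which the paper invokes, $H\wr S_m$ in product action is primitive whenever $H$ is primitive and not regular on the base set, and $S_3$ on three points is both. The hypothesis $n\ge 5$ is what makes $S_n\wr S_m$ \emph{maximal}, not what makes it primitive. Your own block computation gives this once corrected. The subspaces $\langle\mathbbm{1}\rangle$ and the sum-zero hyperplane are invariant under coordinate permutations but not under sign changes. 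Since $\vect{v}-d_i\vect{v}=2v_i\vect{e}_i$ for the sign change $d_i$ of coordinate $i$, every subspace invariant under $\{\pm1\}\wr S_m$ is spanned by standard basis vectors, hence equals $0$ or $\mathbb{F}_3^m$. So the ``main technical point'' you leave open is empty. You also need $G\not\le A_{3^m}$, not merely $G\neq A_{3^m}$, before passing to a maximal overgroup other than $A_{3^m}$; this follows from \Cref{lemma:odd-perm-exists}.

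In Step~3 you argue the product-action case in the wrong direction. Exhibiting elements of $S_{3^k}\wr S_{m/k}$ that break codewords shows that this group is not contained in $G$, whereas you need $G$ not contained in it (or in any conjugate). You are right that this case needs treatment: the paper's proof lists only the affine, diagonal and almost simple types, and it never addresses $S_{3^k}\wr S_{m/k}$ with $k\ge 2$, which occurs whenever $m$ is composite (e.g.\ $S_9\wr S_2$ for $m=4$).

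One way to close it uses $W=S_3\wr S_m\le G$ alone. Suppose $W$ preserved a Cartesian decomposition of $[3]^m$ into $l=m/k$ factors with $2\le l\le m/2$. The kernel $K$ of its action on the $l$ factor partitions is nontrivial (as $|W|>l!$), so it contains $T$, the unique minimal normal subgroup of $W$ ($T$ is self-centralizing and, by the computation above, irreducible). Thus $K$ is transitive, yet it preserves each factor partition, so it is imprimitive. On the other hand, $K_{\vect{0}}=K\cap(\{\pm1\}\wr S_m)$ has quotient embedding in $S_l$. Elementary abelian $2$-subgroups of $S_l$ have rank at most $l/2<m-1$, and normal subgroups of $S_m$ not containing $A_m$ have index exceeding $l!$. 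Hence $K_{\vect{0}}$ contains all even sign changes and projects onto a subgroup containing $A_m$, so it acts irreducibly, and $K$ is primitive, a contradiction.

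Finally, your last step does not work as stated. The inclusion $T\le G\le\AGL(m,3)^g$ does not make $T$ normal in $\AGL(m,3)^g$, and $\AGL(m,3)$ has regular elementary abelian subgroups other than its translation group (already for $m=2$). So uniqueness of the regular normal subgroup does not apply; the paper, for its part, silently identifies the affine overgroup with the standard one. A fix: the map $\vect{x}\mapsto-\vect{x}$ lies in $G_{\vect{0}}$, which is linear for the overgroup's affine structure with origin $\vect{0}$. A linear involution over $\mathbb{F}_3$ fixing only $\vect{0}$ is $-I$, so both structures have the same lines $\{\vect{0},\vect{p},-\vect{p}\}$ through $\vect{0}$. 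Translating by $T$, they have the same lines everywhere. Hence they have the same addition, since $\vect{x}+\vect{y}$ is the negative of the third point on the line through $\vect{x}\neq\vect{y}$, and therefore the same affine group.
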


\Cref{theorem:affine_type_permutation} states that automorphisms of
$\mathcal{C}(3,m,\mathcal{W})$ are always of the form $\sigma(\vect{i}) = A
\vect{i} + \vect{b}$, where $A$ is an invertible matrix. The proof uses the
characterization of maximal subgroups of symmetric groups \cite{LIEBECK1987365}. We
argue that if $S_3 \wr S_m$ is not a maximal subgroup of $S_{3^m}$, then it must
be a proper subgroup of some other maximal subgroup of $S_{3^m}$.
To complete the proof we show that for $m \le 3$, all non-boundary codes have
automorphism group $S_3 \wr S_m$
$\le \AGL(m, 3)$, and for $m \ge 4$, the only maximal subgroup that contains the
product action $S_3 \wr S_m$ is $\AGL(m, 3)$.

Define $O(m, q)$ to be the group of $m \times m$ invertible matrices $A$ over
$\mathbb{F}_q$ such that $A^TA = I$, and $\widetilde{O}(m, q)$ to be the group of
$m \times m$ invertible matrices $A$ over $\mathbb{F}_q$ such that $A^TA = \lambda
I$ for some non-zero $\lambda$. We define
two semidirect product subgroups, $AO(m, q) \triangleq \mathbb{Z}_q^m \rtimes
O(m,q)$ and $A\widetilde{O}(m,q)\triangleq \mathbb{Z}_q^m \rtimes
\widetilde{O}(m,q)$. Clearly, $O(m,q)\leq \widetilde{O}(m,q)$ and hence $AO(m,
q)\leq A\widetilde{O}(m,q)$.

\begin{theorem}
\label{thm:n=3,singleton-weight-set}
    For $m \ge 4$ and $1 \le w \le m$,
    \begin{enumerate}
        \item If $3 \nmid w$ or $m$ is odd, $\AUT({C(3, m, \{w\})}) \le
        \operatorname{AO}(m, 3)$ with equality only in the cases $m = 4, \, w = 2$
        and $m = 5, \, w = 3$.
        \item If $3 \mid w$ and $m$ is even, $\AUT({C(3, m, \{w\})}) \le
        \operatorname{A\widetilde{O}}(m, 3)$ with equality only in the case $m = 4,
        \, w = 3$.
    \end{enumerate}
\end{theorem}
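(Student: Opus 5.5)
Since $\{w\}$ with $1\le w\le m$ is not a boundary case for $m\ge 4$, \Cref{theorem:affine_type_permutation} gives $\AUT(\mathcal{C}(3,m,\{w\}))\le \AGL(m,3)$. Every translation $\vect{x}\mapsto\vect{x}+\vect{b}$ is an automorphism, because the code is an ideal of $\mathbb{F}_2[\mathbb{Z}_3^m]$. It therefore suffices to work out which linear parts $A\in \operatorname{GL}(m,3)$ preserve the code.

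To do this, we pass to the DFT domain. Let $\beta\in\mathbb{F}_4$ be a primitive cube root of unity and let the DFT be $\widehat{a}_{\vect{j}}=\sum_{\vect{i}} a_{\vect{i}}\beta^{\vect{j}^T\vect{i}}$. If $b_{\vect{i}}=a_{A^{-1}\vect{i}}$, then $\widehat{b}_{\vect{j}}=\widehat{a}_{A^T\vect{j}}$. Thus the linear map $A$ preserves $\mathcal{C}(3,m,\{w\})$ exactly when $A^T$ (equivalently $A^{-T}$) maps the frequency set $F_w=\{\vect{j}\in\mathbb{F}_3^m:\Ham(\vect{j})=w\}$ onto itself. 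One direction uses that the code is exactly the set of words whose spectrum lies in $F_w$. For the other direction we need that every single frequency is realised: the idempotents of the semisimple algebra give codewords whose spectrum is supported precisely on a cyclotomic class $\{\vect{j},2\vect{j}\}$, and these classes lie inside $F_w$. So the key step is to show that the set stabilizer of $F_w$ in $\operatorname{GL}(m,3)$ is contained in $O(m,3)$ when $3\nmid w$ or $m$ is odd, and in $\widetilde{O}(m,3)$ otherwise.

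For the set-stabilizer step, the key observation is that over $\mathbb{F}_3$ the Hamming weight is congruent to the quadratic form $Q(\vect{j})=\vect{j}^T\vect{j}$ modulo $3$. So $F_w$ lies in the quadric level set $\{Q=w \bmod 3\}$. The plan is to recover $Q$, up to a scalar, from the combinatorics of $F_w$. Consider differences $\vect{j}-\vect{j}'$ and sums $\vect{j}+\vect{j}'$ of pairs in $F_w$, and the counts $|F_w\cap(F_w+\vect{v})|$ as functions of $\vect{v}$. These counts are invariant under the stabilizer. Computed explicitly, they depend only on the weight pattern of $\vect{v}$, and they let us distinguish weight-$1$ vectors $\pm e_i$ from the others. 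I expect this to work for $m\ge 4$, but it is the step needing the most care. If it succeeds, any stabilizing $A$ permutes $\{\pm e_i\}$ and is therefore a signed permutation matrix, which is orthogonal. This would even give a stronger statement than needed.

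If that fails, the fallback is more direct. A linear map preserving $F_w$ must preserve the span-related structure, and since the quadric determines $Q$ up to a scalar when $F_w$ spans enough of the level set, $A^T$ must be a similitude of $Q$: $A A^T=\lambda I$. A rescaling of $Q$ by $\lambda=2$ maps $\{Q=w\}$ to $\{Q=2w\}$, and this is consistent only if $w\equiv 2w\pmod 3$, i.e.\ $3\mid w$. When $m$ is odd, $\lambda=2$ is also excluded because $\det(A)^2=\lambda^m$ forces $\lambda$ to be a square. This gives the split between cases 1 and 2.

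For the equality claims, the cases $(m,w)=(4,2),(5,3),(4,3)$ should be checked by verifying that $F_w$ equals the entire level set $\{Q=w\bmod 3\}$ minus possibly $\vect{0}$. For example, with $m=4,w=2$ the weights $\equiv 2 \pmod 3$ in $[0,4]$ are just $\{2\}$. Then the full (similitude) orthogonal group stabilizes it. In all other cases a weight-count argument shows that $F_w$ is a proper subset of its level set, and an explicit orthogonal map moves a weight-$w$ vector to a vector of a different congruent weight, so the inclusion is strict. The main obstacle is the claim that the stabilizer of $F_w$ lies in the similitude group. I would attack it first through the neighbour-count invariants.
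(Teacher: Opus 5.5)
Your reduction steps match the paper: Theorem~\ref{theorem:affine_type_permutation}, the DFT criterion, $Q(\vect{x})\equiv\Ham(\vect{x}) \pmod 3$, the determinant argument excluding $\lambda=2$ for odd $m$, and the level-set argument for the three equality cases. However, the step that carries the theorem is never proved: that $A^T(F_w)=F_w$ forces $AA^T=\lambda I$.

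Your primary route cannot work. For $(m,w)=(4,2)$, $F_2$ is the entire level set $\{Q=2\}$. So the non-monomial matrix $H_4\in O(4,3)$ stabilizes $F_2$ while sending $\vect{e}_1$ to $(1,1,1,1)^T$. Hence no stabilizer-invariant such as $|F_w\cap(F_w+\vect{v})|$ can separate $\pm\vect{e}_i$ from weight-$4$ vectors. The conclusion that the stabilizer consists of signed permutation matrices would also contradict the equality $\AUT(\mathcal{C}(3,4,\{2\}))=AO(4,3)$ that you claim later. Your fallback (``the quadric determines $Q$ up to a scalar when $F_w$ spans enough of the level set'') is only an assertion, with no criterion for ``enough'' and no argument.

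The missing idea is that you never need to reconstruct $Q$ from the set $F_w$. Because $A^T$ preserves Hamming weight \emph{pointwise} on $F_w$, you get $\vect{x}^T(S-I)\vect{x}=0$ for every $\vect{x}\in F_w$, where $S=AA^T$. Flip the sign of one coordinate $x_k$, $k\in\supp(\vect{x})$, and subtract. This yields $\sum_{i\in\supp(\vect{x})\setminus\{k\}}S_{ik}x_i=0$ for all $2^{w-1}$ nonzero sign patterns, which forces every off-diagonal $S_{ik}=0$ (this needs $w\ge 2$). Next, exchange one support coordinate for one outside the support (this needs $w<m$). This forces all diagonal entries to be equal, so $S=\lambda I$ with $w(\lambda-1)\equiv 0\pmod 3$, which gives exactly your case split.

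The endpoints must be handled separately. For $w=m$ the code is the Berman code $\B_3(m-1,m)$, whose group $S_3\wr S_m$ is given by Theorem~\ref{thm:maintheoremdualberman}. For $w=1$, the set $F_1=\{\pm\vect{e}_i\}$ directly forces $A$ to be monomial.

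For the strictness claims, ``$F_w$ is a proper subset of its level set'' is not enough on its own. You need either Witt's theorem (transitivity of $O(m,3)$ on nonzero vectors with a given value of $Q$), or an explicit map, as the paper gives: $\operatorname{diag}(H_4,I_{m-4})$ sends a suitable weight-$w$ vector to weight $w\pm 3$.
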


The primary observation required for the proof of Theorem
\ref{thm:n=3,singleton-weight-set} is that if $x$ is a non-zero element of $\mathbb{F}_3$, then $x^2 = 1$. Thus $\Ham(\vect{x}) \equiv
\vect{x}^T\vect{x}$ modulo $3$. This observation gives some necessary conditions
on the automorphisms of $\mathcal{C}(3, m, \{w\})$ which prove Theorem
\ref{thm:n=3,singleton-weight-set}.

\section{The Automorphism Group of Berman Codes:
Proof of Theorem \ref{thm:maintheoremdualberman}}
\label{sec:proofofAutofBerman}

We now prove Theorem \ref{thm:maintheoremdualberman}. $\DB_n(0, m)$ is the
repetition code of length $n^m$, and $\DB_n(m, m)$ is the full space
$\mathbb{F}_2^{n^m}$. The automorphism group of both these codes and their duals is
$S_{n^m}$. The rest of the section is devoted to the case of
$r\in\{1,\hdots,m-1\}$. As $\operatorname{B}_n(r,m)$
and $\DB_n(r,m)$ are dual codes, it is sufficient to show Theorem
\ref{thm:maintheoremdualberman} for $\DB_n(r,m)$. We do this in two stages, given
in Subsections \ref{subsec:hypercubes and dual berman codes} and
\ref{subsec:hamminggraphandAut}. We first obtain some necessary conditions that any
automorphism of $\DB_n(r,m)$ should satisfy, via the minimum weight codewords of
$\DB_n(r,m)$.
Subsequently, we use some graph-theoretic results to complete the proof.

\subsection{Necessary Conditions on Automorphisms of Berman Codes via
Minimum-weight Codewords}
\label{subsec:hypercubes and dual berman codes}

We begin with the notions of an axis-aligned hypercube and its free coordinates.
An axis-aligned (AA) hypercube $H$ of length $n$ in the space $[n]^m$ is defined
as the Cartesian product $H = S_1 \times S_2 \times \dots \times S_m$, where for
each index $i \in [m]$, the set $S_i$ is either the singleton $\{a_i\}$ for some
$a_i \in [n]$, or $S_i = [n]$. The set of free coordinates of the hypercube is
given by $J = \{i\in[m] : S_i = [n]\}$.

Using the characterization of the minimum weight codewords of dual Berman codes \cite[Lemma ~4]{siddheshwar2024recursivesubproductcodesreedmullerlike}, the following observation is easy to see.

\begin{observation}
\label{obs:minwtcodewordsupportHyperplanes}
    The support of each minimum weight codeword of $\DB_n(r,m)$ corresponds to an
    axis-aligned (AA) hypercube in $[n]^m$ with $m-r$ free
    coordinates.
\end{observation}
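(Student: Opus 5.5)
The statement follows directly from the characterization of minimum weight codewords of $\DB_n(r,m)$ for $n>2$ in \cite[Lemma~4]{siddheshwar2024recursivesubproductcodesreedmullerlike}, so the plan is to compute the support of such a codeword explicitly. Let $\vect{u}=\vect{u}_1\otimes\cdots\otimes\vect{u}_m$ be a minimum weight codeword, where exactly $r$ of the factors are standard basis vectors and the remaining $m-r$ factors equal $\mathbbm{1}_n$. We index the coordinates of $\mathbb{F}_2^{n^m}$ by $\vect{i}=(i_1,\dots,i_m)\in[n]^m$, which is consistent with the recursive Plotkin-type definition (the first coordinate selects the block). With this indexing, the entries of a tensor product multiply:
\[
u_{\vect{i}}=\prod_{k}(\vect{u}_k)_{i_k}.
\]

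From this formula the support is a product set. We have $u_{\vect{i}}=1$ if and only if $(\vect{u}_k)_{i_k}=1$ for every $k$. Hence
\[
\supp(\vect{u})=\supp(\vect{u}_1)\times\cdots\times\supp(\vect{u}_m).
\]
If $\vect{u}_k=\vect{e}_{a_k}$, the $k$-th factor is the singleton $S_k=\{a_k\}$. If $\vect{u}_k=\mathbbm{1}_n$, then $S_k=[n]$. This is exactly an AA hypercube in the sense defined above. Its free coordinates are the indices $k$ with $\vect{u}_k=\mathbbm{1}_n$, and there are $m-r$ of them.

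The only point needing care is bookkeeping rather than mathematics. We must check that the ordering convention for coordinates used in the Kronecker product agrees with the indexing of $[n]^m$ used for the code. Any fixed bijection between the two orderings is itself a product-respecting relabeling, so it maps AA hypercubes to AA hypercubes and preserves the number of free coordinates; the conclusion is therefore unaffected. As a sanity check, the resulting minimum weight is $n^{m-r}$, the size of such a hypercube, which is consistent with the known minimum distance of $\DB_n(r,m)$.
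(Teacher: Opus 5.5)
Your proposal is correct and follows the same route as the paper, which simply invokes \cite[Lemma~4]{siddheshwar2024recursivesubproductcodesreedmullerlike} and reads off the support of the tensor product as $\supp(\vect{u}_1)\times\cdots\times\supp(\vect{u}_m)$, a product of $r$ singletons and $m-r$ copies of $[n]$. One remark should be tightened: an arbitrary bijection of $[n]^m$ need not preserve AA hypercubes, but the relabelings that actually arise between Kronecker and tuple indexing only permute the positions of the tuple, and those do.
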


Let $\mathcal{H}_d$ be the set of AA-hypercubes with $d$ free coordinates, and
$\mathcal{H} = \bigcup_{i = 0}^m \mathcal{H}_i$. We refer to elements of
$\mathcal{H}_1$ as \textit{lines}. We now present some relevant properties of these
hypercubes.

\begin{lemma}
\label{intersection-hypercubes}
    For $H, \, H^{\prime} \in \mathcal{H}$, $H \cap H^{\prime} \in \mathcal{H}$,
    or $H \cap H^{\prime} = \emptyset$.
\end{lemma}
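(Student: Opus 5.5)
The plan is to compute the intersection coordinatewise, using that Cartesian products intersect factor by factor. Write $H = S_1 \times \dots \times S_m$ and $H' = S'_1 \times \dots \times S'_m$, where each $S_i$ and $S'_i$ is either a singleton or all of $[n]$. Then
\[
H \cap H' = (S_1 \cap S'_1) \times (S_2 \cap S'_2) \times \dots \times (S_m \cap S'_m),
\]
since a point $\vect{x} = (x_1,\dots,x_m)$ lies in both products exactly when $x_i \in S_i$ and $x_i \in S'_i$ for every $i$.

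Next I would do a case analysis on each factor $T_i = S_i \cap S'_i$.
\begin{enumerate}
\item If $S_i = S'_i = [n]$, then $T_i = [n]$.
\item If exactly one of the two is $[n]$ and the other is $\{a\}$, then $T_i = \{a\}$.
\item If $S_i = \{a\}$ and $S'_i = \{b\}$, then $T_i = \{a\}$ when $a = b$, and $T_i = \emptyset$ otherwise.
\end{enumerate}
So every $T_i$ is a singleton, all of $[n]$, or empty.

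To conclude, split on whether some factor is empty. If some $T_i = \emptyset$, the product is empty, so $H \cap H' = \emptyset$. Otherwise every $T_i$ is a singleton or $[n]$, so $H \cap H'$ is by definition an AA hypercube. It lies in $\mathcal{H}_d$ with free coordinate set $J \cap J'$, where $J, J'$ are the free coordinate sets of $H, H'$, so $d = |J \cap J'|$. In either case the statement holds, with $\mathcal{H} = \bigcup_{i=0}^m \mathcal{H}_i$ accounting for every possible $d$.

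There is no real obstacle here. The only points needing care are that $\mathcal{H}_0$ (single points) is included in $\mathcal{H}$, which covers intersections with no free coordinate, and that the empty product is treated as the separate alternative in the statement.
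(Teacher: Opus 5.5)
Your proposal is correct and follows essentially the same route as the paper. Both write the intersection as the coordinatewise product $\prod_i (S_i \cap S'_i)$, observe that it is empty if some factor is empty, and otherwise note that each factor has size $1$ or $n$, so the intersection is again in $\mathcal{H}$. Your explicit case analysis and the identification of the free coordinate set as $J \cap J'$ just spell out what the paper leaves implicit.
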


\begin{proof}
    Let $H = S_1 \times S_2 \times \dots \times S_m$, and $H^{\prime} =
    S^{\prime}_1 \times S^{\prime}_2 \times \dots \times S^{\prime}_m$. Thus, $H
    \cap H^{\prime} = \left(S_1 \cap S^{\prime}_1\right) \times \left(S_2 \cap
    S^{\prime}_2\right) \times \dots \times \left(S_m \cap S^{\prime}_m\right)$. If
    $\exists \, i$ such that $S_i \cap S_i^{\prime} = \emptyset$, then $H \cap
    H^{\prime} = \emptyset$. Otherwise $\left|S_i \cap S_i^{\prime}\right| \in \{1,
    n\}, \forall i \in [m]$ which implies $H \cap H^{\prime} \in \mathcal{H}$.
\end{proof}

\begin{lemma}
\label{lemma:how-to-get-small-hypercubes}
    For $1 \le d \le m-2$, and any $H \in \mathcal{H}_d$, there exist distinct
    $H_1,H_2 \in \mathcal{H}_{d+1}$ such that $H_1 \cap H_2 = H$.
\end{lemma}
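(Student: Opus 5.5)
The construction is explicit: we enlarge $H$ by freeing one additional coordinate in two different ways. Write $H = S_1 \times \dots \times S_m$ with free coordinate set $J$, $|J| = d$. Since $d \le m-2$, the complement $[m]\setminus J$ has at least two elements. Pick two distinct indices $i, k \notin J$. For these indices $S_i = \{a_i\}$ and $S_k = \{a_k\}$ are singletons.

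Define $H_1$ from $H$ by replacing $S_i$ with $[n]$, and define $H_2$ from $H$ by replacing $S_k$ with $[n]$, leaving all other factors unchanged. Each of $H_1$ and $H_2$ is a Cartesian product of singletons and copies of $[n]$. Their free coordinate sets are $J\cup\{i\}$ and $J\cup\{k\}$, both of size $d+1$, so $H_1, H_2 \in \mathcal{H}_{d+1}$.

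The two hypercubes are distinct because their free coordinate sets differ: $i$ is free in $H_1$ but not in $H_2$. Since $n \ge 2$, a factor equal to $[n]$ is never a singleton, so the factorizations really are different sets.

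For the intersection, we use the coordinatewise description from the proof of Lemma \ref{intersection-hypercubes}: $H_1 \cap H_2 = \prod_j (S^{(1)}_j \cap S^{(2)}_j)$. At coordinate $i$ the factor is $[n]\cap\{a_i\} = \{a_i\}$, at coordinate $k$ it is $\{a_k\}\cap[n] = \{a_k\}$, and every other coordinate contributes $S_j \cap S_j = S_j$. Hence $H_1 \cap H_2 = H$. There is no real obstacle; the only point needing care is that the hypothesis $d \le m-2$ is exactly what guarantees the two distinct non-free indices $i$ and $k$.
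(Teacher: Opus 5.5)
Your proposal is correct and is essentially the paper's own proof. Both pick two distinct fixed coordinates, which exist because $d \le m-2$, and free each one separately to obtain $H_1, H_2 \in \mathcal{H}_{d+1}$ with $H_1 \cap H_2 = H$. Your explicit checks of distinctness (using $n \ge 2$) and of the coordinatewise intersection spell out what the paper states only as ``by construction.''
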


\begin{proof}
    Let $H = S_1 \times S_2 \times \dots \times S_m$. Since $H \in \mathcal{H}_d$
    and $d \le m-2$, there exist distinct indices $j, k$ such that $|S_j| = |S_k|
    = 1$. Let $H_1 = S_1 \times S_2 \times \dots \times S_{j-1} \times [n] \times
    S_{j+1} \times \dots S_m$, and $H_2 = S_1 \times S_2 \times \dots \times
    S_{k-1} \times [n] \times S_{k+1} \times \dots S_m$. Clearly, $H_1, H_2 \in
    \mathcal{H}_{d+1}$. Further, by construction, we see that $H_1\cap H_2=H$.
\end{proof}

The corollary below easily follows from Lemma \ref{lemma:how-to-get-small-hypercubes}.
\begin{corollary}
\label{lines-as-intersection}
    For any $d \notin \{0, m\}$, for any line $L \in \mathcal{H}_1$, there exist
    $H_1, H_2, \dots, H_{2^{d-1}} \in \mathcal{H}_d$ such that $L = \bigcap_{i =
    1}^{2^{d-1}} H_i$.
\end{corollary}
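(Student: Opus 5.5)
We prove the statement by induction on $d$, applying Lemma \ref{lemma:how-to-get-small-hypercubes} repeatedly. The claim is that for every $1\le d\le m-1$ and every line $L\in\mathcal{H}_1$, we can write $L$ as an intersection of exactly $2^{d-1}$ hypercubes from $\mathcal{H}_d$, with repetitions allowed. If distinct hypercubes are preferred, note that repetitions do not change the intersection, so the same construction also yields a family of at most $2^{d-1}$ distinct members whose intersection is $L$.

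For the base case $d=1$, take $H_1=L$; this is an intersection of $2^0=1$ element of $\mathcal{H}_1$.

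For the inductive step, suppose $2\le d\le m-1$ and that $L=\bigcap_{i=1}^{2^{d-2}} G_i$ with each $G_i\in\mathcal{H}_{d-1}$. Since $1\le d-1\le m-2$, Lemma \ref{lemma:how-to-get-small-hypercubes} applies to each $G_i$. It gives distinct $G_i^{(1)},G_i^{(2)}\in\mathcal{H}_d$ with $G_i^{(1)}\cap G_i^{(2)}=G_i$. Then
\[
L=\bigcap_{i=1}^{2^{d-2}}\bigl(G_i^{(1)}\cap G_i^{(2)}\bigr),
\]
which is an intersection of $2\cdot 2^{d-2}=2^{d-1}$ elements of $\mathcal{H}_d$. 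This completes the induction. The hypothesis $d\notin\{0,m\}$ is exactly what the argument needs: every intermediate dimension $d-1$ lies in $[1,m-2]$, which is the range where Lemma \ref{lemma:how-to-get-small-hypercubes} is valid.

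The main subtlety is interpretation, not mathematics. The members $H_i$ must be read as possibly repeated. For example, when $m=2$ and $d=1$ the only choice is $L$ itself. Likewise, the families produced at later stages may coincide across different $i$, because the same hypercube can arise from different $G_i$.

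A direct alternative avoids the induction. Let $L$ have free coordinate $j$ and fixed values $a_k$ for $k\neq j$. Choose any $d-1$ coordinates $K$ outside $j$ and form the hypercubes obtained by freeing $j$ together with $d-1$ other coordinates. Every coordinate $k\neq j$ must remain fixed in at least one of the chosen hypercubes, which is achievable by freeing suitable $(d-1)$-subsets of the $m-1$ non-$j$ coordinates. This gives an intersection equal to $L$. Either route yields the corollary.
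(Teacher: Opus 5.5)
Your induction is correct and is exactly the argument the paper has in mind when it says the corollary ``easily follows'' from Lemma~\ref{lemma:how-to-get-small-hypercubes}: you apply the lemma to each of the $2^{d-2}$ members of $\mathcal{H}_{d-1}$ to double the count, and $d\le m-1$ keeps $d-1$ inside the lemma's range $1\le d-1\le m-2$. Your remark that the $H_i$ must be allowed to repeat is right and genuinely needed: for $m=4$, $d=3$ only three members of $\mathcal{H}_3$ contain a given line, yet $2^{2}=4$ are required. Your $m=2$, $d=1$ example, however, does not actually exhibit a repetition.
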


The following key lemma shows that the automorphism of $\DB_n(r,m)$ preserves
lines in ${\cal H}_1$, for any $1\leq r\leq m-1$.

\begin{lemma}
\label{lemma:lines-preserved}
    Let $f:[n]^m \rightarrow [n]^m \in \AUT \left(\DB_n(r, m)\right)$. For $1 \le
    r \le m-1$, and for each $L \in \mathcal{H}_1$, we have that $f(L) \in
    \mathcal{H}_1$.
\end{lemma}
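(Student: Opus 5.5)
Since $f$ is a coordinate permutation in $\AUT(\DB_n(r,m))$, it preserves Hamming weight and hence maps minimum weight codewords of $\DB_n(r,m)$ to minimum weight codewords. By \Cref{obs:minwtcodewordsupportHyperplanes}, the supports of minimum weight codewords are exactly the elements of $\mathcal{H}_{m-r}$. We will use the characterization that these codewords are the tensor products with $r$ standard basis vectors and $m-r$ all-ones vectors, so that every $H\in\mathcal{H}_{m-r}$ is the support of such a codeword. Since $f$ acts on supports via $\supp(f(\vect{c}))=f(\supp(\vect{c}))$, we conclude that $f$ induces a map $\mathcal{H}_{m-r}\to\mathcal{H}_{m-r}$. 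This map is injective because $f$ is a bijection on $[n]^m$. Set $d=m-r$, so that $1\le d\le m-1$.

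If $d=1$ we are done immediately. Otherwise $2\le d\le m-1$. Given a line $L\in\mathcal{H}_1$, invoke \Cref{lines-as-intersection} to write $L=\bigcap_{i=1}^{2^{d-1}}H_i$ with each $H_i\in\mathcal{H}_d$. Since $f$ is a bijection, it commutes with intersections, so
\begin{align*}
f(L)=\bigcap_{i} f(H_i).
\end{align*}
Each $f(H_i)$ lies in $\mathcal{H}_d$, and $f(L)\neq\emptyset$. Applying \Cref{intersection-hypercubes} repeatedly then shows $f(L)\in\mathcal{H}$.

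It remains to identify the dimension. Any element of $\mathcal{H}_k$ has exactly $n^k$ points, and $|f(L)|=|L|=n$. Since $n\ge 2$ (indeed $n\ge3$ here), the equation $n^k=n$ forces $k=1$, so $f(L)\in\mathcal{H}_1$.

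The only point needing care is the corollary's applicability, namely $d\notin\{0,m\}$, and this holds because $1\le r\le m-1$. The corollary itself follows by iterating \Cref{lemma:how-to-get-small-hypercubes} from $\mathcal{H}_1$ up to $\mathcal{H}_d$; this iteration is valid since each intermediate dimension stays at most $m-2$ when the target $d\le m-1$. I expect no real obstacle here; the substantive input is that every AA hypercube in $\mathcal{H}_{m-r}$ is the support of a minimum weight codeword, together with the cardinality argument that pins down the dimension of $f(L)$.
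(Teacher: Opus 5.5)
Your proposal is correct and follows essentially the same route as the paper. Minimum-weight codewords force $f$ to map $\mathcal{H}_{m-r}$ into itself; a line is written as an intersection of such hypercubes via \Cref{lines-as-intersection}; and bijectivity, \Cref{intersection-hypercubes}, and the count $|f(L)|=n$ then give $f(L)\in\mathcal{H}_1$. You also make explicit two points the paper uses only implicitly: that every element of $\mathcal{H}_{m-r}$ is in fact a minimum-weight support (this relies on $n>2$), and that the corollary applies for $1\le m-r\le m-1$.
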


\begin{proof}
Since $f$ is a code-automorphism, its action on a minimum-weight codeword of
$\DB_n(r,m)$ must result in another minimum-weight codeword in $\DB_n(r,m)$. By
this, and using Observation \ref{obs:minwtcodewordsupportHyperplanes}, we note that
$\forall \, H \in \mathcal{H}_{m-r}$, we have $f(H) \in \mathcal{H}_{m-r}$. By
\Cref{lines-as-intersection} $\exists \, H_1, H_2, \dots, H_{2^{m-r-1}} \in
\mathcal{H}_{m-r}$ such that $L = \bigcap_{i = 1}^{2^{m-r-1}} H_i$. Since $f$ is
a bijective function, we naturally have $f(L) = \bigcap_{i = 1}^{2^{m-r-1}} f(H_i)$
(see Appendix \ref{appendix:bijec_fn_lemma} for a formal proof). Since $f(H_i)$ are
hypercubes, by \Cref{intersection-hypercubes}, $f(L) \in \mathcal{H}$ or $f(L) =
\emptyset$. Moreover, since $f$ is bijective, $\left|f(L)\right| = \left|L\right|$,
so $f(L) \in \mathcal{H}_1$.
\end{proof}

\subsection{Using Automorphisms of the Hamming Graph to complete the proof}
\label{subsec:hamminggraphandAut}

We now use the automorphism group of the Hamming graph to complete the
characterization of the automorphism group of the Berman codes.

The Hamming graph $G_{H}$ has vertices $V_{H} = [n]^m$ and the edge set $E_{H} =
\left\{ \left(\vect{u}, \vect{v}\right) : \forall \vect{u}, \vect{v} \in [n]^m,
d_H\left(\vect{u}, \vect{v}\right) = 1\right\}$. For a graph $G(V,E)$, an
automorphism $\phi:V \rightarrow V$ is a bijection such that, $(u, v) \in E$, if
and only if $(\phi(u), \phi(v)) \in E$. The collection of such automorphisms form a
group under composition. The automorphism group of the Hamming graph has been
completely characterized in literature
\cite{praeger2018permutation-cartesian-decomp}.

\begin{lemma}[{\cite[Corollary 12.4]{praeger2018permutation-cartesian-decomp}}]
    \label{hamming-graph-automorphism}
    $\AUT(G_H) = S_n \wr S_m$.
\end{lemma}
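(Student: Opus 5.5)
The plan is to prove the two inclusions separately. The easy direction is $S_n\wr S_m\le \AUT(G_H)$. Under the product action, an element $((\sigma_1,\dots,\sigma_m),\pi)$ permutes coordinate positions by $\pi$ and applies a bijection $\sigma_i$ of $[n]$ in each position. Such a map preserves the number of positions in which two vectors differ, so it preserves $\dH$ and in particular the relation $\dH=1$. The substance is the reverse inclusion: every graph automorphism $\phi$ of $G_H$ should be shown to lie in the product action of $S_n\wr S_m$. Since $\phi$ is a graph automorphism, it preserves graph distance, and graph distance in $G_H$ equals $\dH$.

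\textbf{Step 1 (lines are preserved).} First I would show that $\phi$ maps lines (elements of $\mathcal{H}_1$) to lines. Suppose three vertices are pairwise at Hamming distance $1$, with $\vect{u}$ and $\vect{v}$ differing in coordinate $i$. If $\vect{w}$ differed from $\vect{u}$ in some coordinate $j\neq i$, then $\vect{w}$ would differ from $\vect{v}$ in both coordinates $i$ and $j$, contradicting adjacency. So all three vertices differ only in coordinate $i$. Hence every clique of size at least $3$ lies inside a single line. For $n\ge 3$ the maximal cliques of $G_H$ are therefore exactly the lines, and these are permuted by $\phi$. For $n=2$ the lines are exactly the edges, which $\phi$ preserves by definition.

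\textbf{Step 2 (directions are preserved consistently).} Say a line has direction $i$ if its free coordinate is $i$. I would characterise parallelism metrically. Let $L$ and $M$ be distinct lines through a vertex $\vect{v}$, and let $\vect{u}\in L\setminus\{\vect{v}\}$. Then the line through $\vect{u}$ parallel to $M$ is the unique line $M'\neq L$ through $\vect{u}$ such that every vertex of $M'$ is at distance $1$ from some vertex of $M$. This uses only adjacency, so $\phi$ preserves it. Consequently the bijection that $\phi$ induces between the $m$ lines at $\vect{v}$ and the $m$ lines at $\phi(\vect{v})$ transports consistently along edges. Because $G_H$ is connected, this yields a single $\pi\in S_m$ such that $\phi$ maps every direction-$i$ line to a direction-$\pi(i)$ line. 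I expect this step to be the main obstacle. Two points need care: verifying the uniqueness claim in the characterisation, and handling $n=2$, where lines have only two points. In that case I would instead use that two edges at $\vect{v}$ lie in a unique $4$-cycle, and that opposite edges of a $4$-cycle are parallel.

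\textbf{Step 3 (reduction to the base group).} Compose $\phi$ with the element $((\mathrm{id},\dots,\mathrm{id}),\pi^{-1})$ of the product action. The resulting automorphism $\phi'$ maps each direction-$i$ line to a direction-$i$ line. The hyperplane $\{\vect{x}:x_i=a\}$ is the connected component containing any of its points in the subgraph that uses only edges of directions $\neq i$. It follows that $\phi'$ maps $\{x_i=a\}$ onto some $\{x_i=\sigma_i(a)\}$, and $\sigma_i\in S_n$ because $\phi'$ is a bijection. Every vertex is the intersection of the $m$ hyperplanes $\{x_i=x_i\}$, $i\in[m]$, so $\phi'(\vect{x})_i=\sigma_i(x_i)$ for all $i$. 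Thus $\phi'$ lies in the base group $S_n^m$, and hence $\phi\in S_n\wr S_m$.
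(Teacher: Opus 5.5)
Your proof is correct, and it takes a genuinely different route from the paper. The paper gives no argument for this lemma at all: it simply imports the result from Praeger--Schneider's monograph on permutation groups and Cartesian decompositions.

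Your argument is self-contained and elementary, in three moves:
\begin{itemize}
\item Maximal cliques recover the lines.
\item A purely adjacency-based parallelism criterion, together with connectivity of $G_H$, forces a single global permutation $\pi$ of directions.
\item The connected components of the subgraph that omits direction-$i$ edges are exactly the coordinate hyperplanes, which pins down the base-group part.
\end{itemize}
Your consistency step works because, at adjacent $\vect{u},\vect{v}$ on a direction-$i$ line, $\pi_{\vect{u}}(i)=\pi_{\vect{v}}(i)$ trivially (same line), and your criterion gives agreement for every $j\neq i$.

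The step you flagged as the main obstacle does go through. Take a line through $\vect{u}$ in a third direction $k\notin\{i,j\}$. It contains a vertex that differs from every vertex of $M$ in both coordinates $i$ and $k$, so it fails your criterion, and uniqueness holds. This check only needs $n\ge 2$, so your criterion works verbatim for $n=2$. The separate $4$-cycle treatment is therefore unnecessary, though harmless.

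The citation buys brevity and places the result in a general framework. Your route buys something the paper could actually use, because it meshes with the paper's own geometry. \Cref{lemma:lines-preserved} already does the work of your Step 1 for code automorphisms. So your Steps 2 and 3 could be applied directly to any $f\in\AUT(\DB_n(r,m))$. That would make both the detour through $G_H$ and the external reference unnecessary for the proof of \Cref{thm:maintheoremdualberman}.
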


\subsubsection*{Completing the proof of Theorem \ref{thm:maintheoremdualberman}}
Let $f \in \AUT \left(\DB_n(r, m)\right)$, and let $(\vect{u}, \vect{v}) \in
E_{H}$. Observe that $\left(\vect{u}, \vect{v}\right) \in E_{H}$, if and only if
there exists a line $L \in \mathcal{H}_1$ such that $\vect{u}, \vect{v} \in L$.
Let $L$ be such a line. By \Cref{lemma:lines-preserved}, $f\left(\vect{u}\right)$
and $f\left(\vect{v}\right)$ belong to a line, i.e.
$\left(f\left(\vect{u}\right),f\left(\vect{v}\right)\right) \in E_{H}$. Thus, $f$
is an automorphism of $G_{H}$. By \Cref{hamming-graph-automorphism}, this means
$\AUT \left(\DB_n(r, m)\right) \le S_n \wr S_m$, for $1 \le r \le m-1$. Moreover,
it is known that $S_n \wr S_m \le \AUT \left(\DB_n(r, m)\right)$
\cite[Theorem 14]{natarajan2023bermancodesgeneralizationreedmuller} as mentioned in
Subsection \ref{subsubsec:bermancodesandduals}. This shows that $\AUT
\left(\DB_n(r, m)\right) = S_n \wr S_m$, for $1 \le r \le m-1$, which completes the proof of Theorem \ref{thm:maintheoremdualberman}.

\section{Automorphism Group of ${\cal C}(n,m,{\cal W})$, for $n\geq 5$:
Proof of Theorem \ref{thm:n>=5AUT}}
\label{sec:proofofTheoremAUTn>=5}

We now prove Theorem \ref{thm:n>=5AUT}, determining the automorphism groups of several abelian codes constructed via the Discrete
Fourier Transform (DFT) \cite{natarajan2023bermancodesgeneralizationreedmuller,
10.1109/18.165458}, as reviewed in Subsection \ref{subsec:reviewofBermanANDBiDcodes}.
We do this via a sequence of lemmas. We outline these arguments as follows.
\begin{itemize}
\item Lemma \ref{lemma:WRprodalwaysinAUT} shows that the wreath product subgroup
$S_n\wr S_m$ always lies in $\AUT({\cal C}(n,m,{\cal W}))$.
\item Lemma \ref{lemma:WreathProductMaximality}, known from literature
\cite{10.1093/qmath/37.4.419}, shows that any subgroup properly containing $S_n \wr
S_m$ must contain all the even permutations, when $n\geq 5, m\geq 2$.
\item Lemma \ref{lemma:odd-perm-exists} shows that $S_n\wr S_m$ contains odd
permutations, if $n$ is odd.
\item Together with Lemma \ref{lemma:WRprodalwaysinAUT}, Lemmas
\ref{lemma:WreathProductMaximality} and \ref{lemma:odd-perm-exists} imply that the
group $\AUT({\cal C}(n,m,{\cal W}))$ can only be the entire group $S_{n^m}$, or just the wreath product subgroup
$S_n \wr S_m$. We then show the parameter regimes of $\mathcal{W}$ which makes either of
these cases possible, resulting in the proof of Theorem \ref{thm:n>=5AUT}. These
completing arguments are provided in Subsection \ref{subsec:Completingproofofthmn>=5}.
\end{itemize}

We now proceed with the above mentioned lemmas, and provide proofs as needed.

\begin{lemma}
\label{lemma:WRprodalwaysinAUT}
    For any odd $n\geq 3$, and any positive integer $m$, we have $S_n \wr S_m \le
    \AUT \left( \mathcal{C}(n, m, \mathcal{W})\right)$.
\end{lemma}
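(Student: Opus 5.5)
We use the decomposition \eqref{eqn:decomposition}, which writes $\mathcal{C}(n,m,\mathcal{W})$ as a direct sum over $w\in\mathcal{W}$ of the codes $\DB_n(w,m)\cap\B_n(w-1,m)$. (For $w=0$ read $\B_n(-1,m)$ as the whole space, so the summand is $\DB_n(0,m)$.) The plan is to show that $S_n\wr S_m$ preserves each summand. A permutation that maps every summand into itself maps the sum $\sum_w \mathcal{C}_w$ into itself, and it is then an automorphism of the sum because it is a bijection on the finite space $\mathbb{F}_2^{n^m}$ and the sum is a finite set. So it is enough to show $S_n\wr S_m\le \AUT(\DB_n(w,m)\cap \B_n(w-1,m))$ for every $w\in[m+1]$.

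For each fixed $w$, the intersection of two codes is preserved by any permutation that preserves both codes. From \cite[Theorem 14]{natarajan2023bermancodesgeneralizationreedmuller} we have $S_n\wr S_m\le\AUT(\DB_n(r,m))=\AUT(\B_n(r,m))$ for every admissible $r$. For the boundary orders $r\in\{0,m\}$ this is immediate, since the automorphism group there is $S_{n^m}$ by \Cref{thm:maintheoremdualberman}. Applying this with $r=w$ and $r=w-1$ gives $S_n\wr S_m\le \AUT(\DB_n(w,m))\cap\AUT(\B_n(w-1,m))$, and this group is contained in the automorphism group of the intersection. This finishes the argument.

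As an independent check, and in case one prefers not to rely on \eqref{eqn:decomposition}, the claim can also be verified directly on the DFT side. Coordinate permutations in $S_n^m$ act factorwise, and the permutations in $S_m$ permute the tensor factors. The subspace of $\mathbb{F}_2^{n^m}$ spanned by frequencies of Hamming weight $w$ is the sum over $w$-subsets $J\subseteq[m]$ of tensor products having the zero-sum space $\{\vect{x}\in\mathbb{F}_2^n:\sum_i x_i=0\}$ in positions $J$ and $\langle\mathbbm{1}_n\rangle$ elsewhere. Both $\langle\mathbbm{1}_n\rangle$ and the zero-sum space are $S_n$-invariant, so $S_n^m$ preserves each such tensor product. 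The group $S_m$ permutes the subsets $J$ among themselves while keeping $|J|=w$ fixed.

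The one delicate point in this direct check is identifying the weight-$w$ DFT component with these tensor products, that is, showing that the $\mathbb{F}_2$-rational span of the frequencies $\vect{j}$ with $\supp(\vect{j})=J$ is exactly this tensor product. This is the main obstacle. It follows from the one-dimensional fact for odd $n$: the single frequency $0$ gives $\langle\mathbbm{1}_n\rangle$, and the nonzero frequencies together give the zero-sum space. The general case then follows by taking tensor products. In the proof itself we will lean on \eqref{eqn:decomposition} and the cited theorem, which avoids this computation.
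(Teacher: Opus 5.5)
Your argument is correct and is essentially the paper's proof: automorphism groups are closed under intersections and sums of codes, and this is applied to the decomposition \eqref{eqn:decomposition} together with $S_n\wr S_m\le\AUT(\DB_n(r,m))=\AUT(\B_n(r,m))$ (the paper invokes \Cref{thm:maintheoremdualberman}, whose lower bound is exactly the cited Theorem~14). Your explicit handling of the $w=0$ summand is a sound extra, and so is the DFT-side tensor-product check, which would avoid Berman codes entirely once the descent from the extension field to $\mathbb{F}_2$ is filled in; neither is needed.
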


\begin{proof}
    Consider arbitrary linear codes $\mathcal{C}_1$ and $\mathcal{C}_2$ with equal
    block length. Let $\pi \in \AUT(\mathcal{C}_1) \cap \AUT(\mathcal{C}_2)$. If
    $\vect{c} \in \mathcal{C}_1 \cap \mathcal{C}_2$, then clearly $\pi(\vect{c})
    \in \mathcal{C}_1 \cap \mathcal{C}_2$. Thus, $\AUT(\mathcal{C}_1) \cap
    \AUT(\mathcal{C}_2) \le \AUT(\mathcal{C}_1 \cap \mathcal{C}_2)$. Similarly, if
    $\vect{c}_1+\vect{c}_2 \in \mathcal{C}_1+\mathcal{C}_2$ such that $\vect{c}_1
    \in \mathcal{C}_1, \, \vect{c}_2 \in \mathcal{C}_2$, then
    $\pi(\vect{c}_1+\vect{c}_2) = \pi(\vect{c}_1) + \pi(\vect{c}_2) \in
    \mathcal{C}_1+\mathcal{C}_2$. Thus, $\AUT(\mathcal{C}_1) \cap
    \AUT(\mathcal{C}_2) \le \AUT(\mathcal{C}_1 + \mathcal{C}_2)$. Now, we can use
    the decomposition \cite{natarajan2023bermancodesgeneralizationreedmuller} as
    given in (\ref{eqn:decomposition}),
    $$
    \mathcal{C}(n,m,\mathcal{W}) = \bigoplus_{w \in \mathcal{W}}
    \left(\DB_n(w, m) \cap \B_n(w-1,m)\right).
    $$
    Now, using Theorem \ref{thm:maintheoremdualberman} and our arguments above, we
    thus have that $S_n \wr S_m \le \AUT \left( \mathcal{C}(n, m,
    \mathcal{W})\right)$.
\end{proof}

\begin{corollary}
    $\AUT\left(\mathcal{C}(n, 1,\mathcal{W})\right) = S_n$.
\end{corollary}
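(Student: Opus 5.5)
The plan is to sandwich the automorphism group between two copies of $S_n$, which is straightforward when $m=1$. For $m=1$ the coordinates are indexed by $[n]$, so the block length is $n^1=n$. Every code automorphism is by definition a permutation of the $n$ coordinates, which gives the upper bound $\AUT(\mathcal{C}(n,1,\mathcal{W}))\le S_n$.

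For the lower bound, I will invoke Lemma~\ref{lemma:WRprodalwaysinAUT} with $m=1$. This gives $S_n\wr S_1\le \AUT(\mathcal{C}(n,1,\mathcal{W}))$ for every $\mathcal{W}\subseteq[2]$. I then need to identify $S_n\wr S_1$ with $S_n$ in its natural action on $[n]$. By the definition of the wreath product, $S_n\wr S_1=S_n^1\rtimes S_1$. Since $S_1$ is trivial, the semidirect product collapses to $S_n$. Under the product action, the element $((\sigma_1),\mathrm{id})$ sends $x_1\mapsto\sigma_1(x_1)$, which is exactly the natural action of $S_n$ on $[n]$. Combining the two inclusions gives equality.

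As a sanity check, for $m=1$ the only subsets of $\{0,1\}$ are $\emptyset,\{0\},\{1\},\{0,1\}$, and all four are boundary cases in the sense of Note~\ref{note:trivial-codes-case}. So the answer $S_{n^1}=S_n$ agrees with that note, and one could alternatively prove the corollary directly from it. I will nonetheless present the argument via Lemma~\ref{lemma:WRprodalwaysinAUT}, since that is how the corollary is positioned.

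The only step that needs any care is checking that Lemma~\ref{lemma:WRprodalwaysinAUT} applies when $m=1$. That lemma goes through Theorem~\ref{thm:maintheoremdualberman}, which for $r\in\{0,m\}=\{0,1\}$ already gives the full symmetric group $S_{n^m}=S_n$, and it uses the decomposition~\eqref{eqn:decomposition}. With $m=1$ the only summands in~\eqref{eqn:decomposition} come from $w\in\{0,1\}$, so everything reduces to these boundary Berman codes and the argument goes through.
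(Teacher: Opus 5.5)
Your proof is correct and matches the paper's intended argument: the corollary is placed right after \Cref{lemma:WRprodalwaysinAUT} because setting $m=1$ there gives $S_n = S_n\wr S_1 \le \AUT(\mathcal{C}(n,1,\mathcal{W})) \le S_{n^1}$. Your side remark that every $\mathcal{W}\subseteq\{0,1\}$ is a boundary case (the paper notes this too, in the proof of \Cref{lemma:m<=3,n=3}) gives an equally valid one-line alternative.
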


\begin{lemma}[{\cite[Theorem B]{10.1093/qmath/37.4.419}}]
    \label{lemma:WreathProductMaximality}
    If $n \ge 5$ and $m \ge 2$,then any subgroup of $S_{n^m}$ properly containing $S_n \wr S_m$, must contain $A_{n^m}$. In other words, for some $G^* \le S_{n^m}$, if it is true that $S_n
    \wr S_m < G^*$, then it holds that $A_{n^m} \leq G^*$.
\end{lemma}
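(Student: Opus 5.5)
The plan is to argue through the structure theory of primitive groups, that is, the O'Nan--Scott theorem together with known results on primitive groups of small minimal degree. Reproving the full classification of maximal subgroups is not the aim. Let $G^*$ satisfy $S_n \wr S_m < G^* \le S_{n^m}$ with $n\ge 5$, $m\ge 2$, and put $N=n^m$.

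\emph{Step 1: primitivity.} First I would check that $S_n \wr S_m$ is primitive in its product action on $[n]^m$. Since $S_n$ is $2$-transitive and not regular on $[n]$ for $n\ge 3$, the stabilizer of $(0,\dots,0)$ is $S_{n-1}\wr S_m$. Its orbits are the Hamming spheres, and any block containing two points at Hamming distance $k\ge1$ is forced, by the transitivity of $S_{n-1}\wr S_m$ on each sphere and the connectivity of the distance-$k$ graphs, to be all of $[n]^m$. Hence $G^*$ is also primitive. It also contains the element $g=((\tau,1,\dots,1),1)$, where $\tau\in S_n$ is a $3$-cycle. This $g$ moves exactly $3n^{m-1}$ points, a fraction $3/n$ of $\Omega=[n]^m$.

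\emph{Step 2: minimal degree.} Next I would invoke the classification of primitive groups of small minimal degree (Liebeck--Saxl, refined by Guralnick--Magaard). It says that a primitive group of degree $N$ not containing $A_N$ whose minimal degree is below $N/2$ is, up to equivalence, either a subgroup of $S_k\wr S_l$ in product action on $k$-sets-type or $[k]^l$-type domains, with socle $A_k^l$, or one of a few small exceptions. For $n\ge 7$ we have $3/n<1/2$, so this applies to $G^*$ directly. If $A_N\not\le G^*$, then $G^*\le S_k\wr S_l$ for some product structure with $k^l=N$, or $G^*$ is almost simple with an alternating socle acting on subsets. The second option is ruled out by a divisibility or order comparison with $|S_n\wr S_m|=(n!)^m m!$ together with $N=n^m$.

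\emph{Step 3: comparing product structures.} In the product case, the socle $A_n^m$ of $S_n\wr S_m$ must normalize, and I expect be contained in, the socle $A_k^l$ of the larger group. The Cartesian decompositions preserved by $G^*$ then have to be coarsenings of the standard one, which forces $k=n^{a}$ and $l=m/a$. I would then show that the subgroup $S_{n^a}\wr S_{m/a}$ with $a\ge 2$ does \emph{not} contain $S_n\wr S_m$. The reason is that the $S_m$-part permutes coordinates across the blocks of size $a$, so the only overgroup of this form is $S_{n^m}$ itself (the case $a=m$, $l=1$). The case $a=1$ gives $G^*=S_n\wr S_m$, contradicting properness. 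Together these yield $A_N\le G^*$.

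\emph{Main obstacle.} The hard part is $n=5$, where the moved fraction $3/5$ exceeds $1/2$, so the minimal degree bound does not apply directly. The same issue arises for the affine O'Nan--Scott type, which is possible only when $N$ is a prime power, for instance $n=5$ or $n=9$. For these I would first try a direct O'Nan--Scott case analysis on $G^*$. In the affine type, the regular normal elementary abelian subgroup $V$ of $G^*$ gives $G^*\le \AGL(d,p)$ with $p^d=5^m$. The nonabelian simple groups $A_5$ appearing as coordinate factors would have to act linearly in a manner compatible with $V$. I would try to derive a contradiction from the fact that $A_5^m$ fixes a point while its conjugates under $V$ generate too much, or alternatively from the solvable radical of $G^*\cap S_{n-1}\wr S_m$. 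The diagonal and twisted-wreath types would be excluded by degree arithmetic, since $5^m\ne|T|^{k-1}$ because a nonabelian simple group has order divisible by $4$. The almost simple type would be handled using $N=5^m$ and the known list of primitive almost simple groups of prime-power degree (Guralnick). I expect this residual $n=5$ case to be where most of the effort lies, and it is the point at which one would in practice defer to \cite[Theorem B]{10.1093/qmath/37.4.419}.
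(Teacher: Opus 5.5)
The paper gives no proof of this lemma; it imports it from the cited Theorem B. Read as an independent proof, your sketch has genuine gaps, and they sit exactly where the difficulty lies.

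\textbf{First gap: the small cases.} You hand $n=5$ back to the citation. You also miss $n=6$, since $3/6$ is not below $1/2$. This obstacle comes from your choice of element. Take $\tau$ to be a transposition instead of a $3$-cycle. Then $g$ moves exactly $2n^{m-1}=(2/n)N<N/2$ points for every $n\ge 5$, so the minimal-degree classification applies uniformly. This also removes your affine worry. A non-identity element of $\AGL(d,p)$ fixes an affine subspace of at most $N/p$ points, so it moves at least $N/2$ points, and hence $g$ is not affine.

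\textbf{Second gap, the serious one: Step 3.} You state it only as an expectation. Yet the claim that every Cartesian decomposition preserved by $G^*$ coarsens the standard one is the whole content of the lemma. Here is one way to supply it. Let $M=A_n^m$ and $M^*=A_k^l$.
\begin{itemize}
\item Since $M$ is the unique minimal normal subgroup of $S_n\wr S_m$, the normal subgroup $M^*\cap(S_n\wr S_m)$ is either trivial or contains $M$.
\item If it were trivial, $A_n^m$ would embed in $G^*/M^*\le\mathbb{Z}_2\wr S_l$, and hence in $S_l$. This is impossible, because $l\le m\log_5 n<nm$, the minimal faithful degree of $A_n^m$. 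So $M\le M^*$, and $M^*$ lies in the base group.
\item Hence each partition $\Gamma$ in the decomposition is $M$-invariant. It therefore corresponds to an overgroup of $M_\omega=A_{n-1}^m$ in $M$.
\item Because $A_{n-1}$ is maximal in the simple group $A_n$, a projection argument shows these overgroups are exactly $\prod_{s\in S}A_n\times\prod_{s\notin S}A_{n-1}$.
\item So each $\Gamma$ is the fibration by the coordinates in some $C_\Gamma\subseteq\{1,\dots,m\}$. The Cartesian property forces the $C_\Gamma$ to partition $\{1,\dots,m\}$.
\item Invariance under the top group $S_m$ forces this partition to be trivial. Hence either $l=1$, or $G^*$ preserves the standard decomposition and so $G^*\le S_n\wr S_m$.
\end{itemize}
The same argument works when the factors act on $j$-subsets. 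Without something of this kind, Step 3 is not a proof.

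\textbf{Smaller points.} In the almost simple subset-action case, an order comparison alone does not work, because nothing you state keeps $k!$ below $(n!)^m m!$ when $m$ is large. Argue instead as follows.
\begin{itemize}
\item The same normal-subgroup argument gives $M\le A_k$.
\item $M$ is then transitive on $j$-subsets with $j\le k/2$, so it is $2$-homogeneous by Livingstone--Wagner.
\item Having even order, $M$ is $2$-transitive.
\item This contradicts Burnside's theorem: a $2$-transitive group has a unique minimal normal subgroup, while $A_n^m$ with $m\ge 2$ has $m$ of them.
\end{itemize}
Finally, the remaining exceptional families in the minimal-degree classification still have to be eliminated. Your sketch only names them.
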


\begin{lemma}
    \label{lemma:odd-perm-exists}
    If $n$ is odd, then $S_n\wr S_m \nleq A_{n^m}$.
\end{lemma}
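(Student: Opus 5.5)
We will exhibit one explicit element of $S_n \wr S_m$ that acts on $[n]^m$ as an odd permutation. The natural candidate lies in the base group: take $\sigma=((\tau,e,\dots,e),\mathrm{id})$, where $\tau\in S_n$ is a transposition, say swapping $0$ and $1$. The other $m-1$ components are the identity, and no coordinates are permuted. Under the product action, $\sigma$ sends $(x_1,x_2,\dots,x_m)$ to $(\tau(x_1),x_2,\dots,x_m)$.

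The key step is to write down the cycle structure of $\sigma$ on $[n]^m$. For each fixed tail $(x_2,\dots,x_m)\in[n]^{m-1}$, $\sigma$ swaps the two points $(0,x_2,\dots,x_m)$ and $(1,x_2,\dots,x_m)$. It fixes every point whose first coordinate is neither $0$ nor $1$. So $\sigma$ is a product of exactly $n^{m-1}$ disjoint transpositions, and therefore $\operatorname{sgn}(\sigma)=(-1)^{n^{m-1}}$.

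Since $n$ is odd, $n^{m-1}$ is odd, so $\operatorname{sgn}(\sigma)=-1$. Thus $\sigma\notin A_{n^m}$, which gives $S_n\wr S_m\nleq A_{n^m}$. The case $m=1$ needs no separate treatment: there $\sigma$ is itself a single transposition.

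There is no real obstacle. The only point needing care is to verify that the product action of this base-group element acts exactly as described, so that the transposition count of $n^{m-1}$ is correct. Note that the parity hypothesis is essential: for even $n$ this same element is even, which is consistent with the lemma being stated only for odd $n$.
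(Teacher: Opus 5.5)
Your proof is correct and follows essentially the same argument as the paper: the paper takes an arbitrary odd $\sigma_i$ in one coordinate of the base group and notes that it induces $n^{m-1}$ disjoint copies of $\sigma_i$, so the sign is $\operatorname{sgn}(\sigma_i)^{n^{m-1}}=-1$. You specialize to a transposition, which makes the count of $n^{m-1}$ disjoint transpositions fully explicit, and like the paper you implicitly use $n\ge 3$ (the paper's standing assumption) so that such an odd element of $S_n$ exists.
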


\begin{proof}
    Consider the natural product action of the wreath product $S_n \wr S_m$ on the
    set $[n]^m$. Let $(\sigma_1,\dots,\sigma_m) \in S_n^m$ be an element of the
    base group. Fix an index $i \in \{1,\dots,m\}$ and suppose that $\sigma_i$ is
    an odd permutation, while $\sigma_j$ is identity for all $j \neq i$. We analyze
    the parity of the induced permutation on $[n]^m$. For each choice of coordinates
    $(x_1,\dots,x_{i-1},x_{i+1},\dots,x_m)$, the permutation $\sigma_i$ acts on
    the $i$th coordinate, producing a copy of $\sigma_i$ on a subset of $[n]^m$ of
    size $n$. There are exactly $n^{m-1}$ such choices, so the induced permutation
    on $[n]^m$ is the product of $n^{m-1}$ disjoint copies of $\sigma_i$. Hence the
    sign of this permutation is $\operatorname{sgn}(\sigma_i)^{\,n^{m-1}}$. Since
    $\sigma_i$ is odd and $n$ is odd, we have $n^{m-1}$ odd, and therefore
    $\operatorname{sgn}(\sigma_i)^{\,n^{m-1}} = -1$. Thus the induced permutation
    on $[n]^m$ is odd. Thus, $S_n \wr S_m \nleq A_{n^m}$.
\end{proof}

\subsection{Completing the proof of Theorem \ref{thm:n>=5AUT}}
\label{subsec:Completingproofofthmn>=5}

Combining the implications of Lemmas \ref{lemma:WRprodalwaysinAUT},
\ref{lemma:WreathProductMaximality} and \ref{lemma:odd-perm-exists}, we now give
the arguments to complete the proof of Theorem \ref{thm:n>=5AUT}. Note that the
only subgroup of $S_{n^m}$ properly containing $A_{n^m}$ is $S_{n^m}$ itself.

Using this fact and \Cref{lemma:WreathProductMaximality} and
\Cref{lemma:odd-perm-exists}, we can say that the only subgroup properly containing
$S_n \wr S_m$ is $S_{n^m}$, when $n$ is odd, which is the case for the code
$\mathcal{C}(n, m, \mathcal{W})$. Thus, the automorphism group of $\mathcal{C}(n,
m, \mathcal{W})$ is either $S_n \wr S_m$ or $S_{n^m}$.

We now show that there are exactly four binary linear codes which have the
automorphism group $S_{n^m}$, and all of these codes are abelian codes of the type
$\mathcal{C}(n, m, \mathcal{W})$ with appropriate ${\cal W}$.

Consider an arbitrary binary linear code $\mathcal{C}$ of block length $N = n^m$,
such that $\AUT(\mathcal{C}) = S_N$. Suppose there is $\vect{c} \in \mathcal{C}$
with $w=\Ham(\vect{c})$. Since $\AUT(\mathcal{C}) = S_N$, this means that every
possible binary vector of weight $w$ must be in $\mathcal{C}$. We must then have
one of two possible cases.
\begin{enumerate}
    \item There are no codewords in $\mathcal{C}$ with non-zero weight that is
    smaller than $N$. The only possible codes are $\{\vect{0}\}$ (in which case
    ${\cal W}=\emptyset$), and the repetition code (in which case ${\cal W}=\{0\}$).
    It is easy to verify in both these cases, $\AUT(\mathcal{C}) = S_N$.
    \item Suppose there is a codeword $\vect{c}\in{\cal C}$ with non-zero weight
    $w<N$. Then, every vector with weight $w$ is present in ${\cal C}$.
    Since $\mathcal{C}$ is linear, it must therefore contain every binary vector of weight
    $2$ (by appropriate sums of pairs of weight-$w$ codewords).

    Thus, $\mathcal{C}$ must either be the single parity check code (for which
    ${\cal W}=\{1,\hdots,m\}$), or $\mathcal{C} = \mathbb{F}_2^N$ (for which
    ${\cal W}=\{0,1,\hdots,m\}$). It is easy to verify in both these cases,
    $\AUT(\mathcal{C}) = S_N$.
\end{enumerate}
Note that we have exhausted all binary codes with $\AUT(\mathcal{C})=S_N$. For
every other choice of ${\cal W}$ apart from the four above, we must thus have $\AUT
\left( \mathcal{C}(n, m, \mathcal{W}) \right) = S_n \wr S_m$. This concludes the
proof of Theorem \ref{thm:n>=5AUT}.

\section{Automorphism Groups of abelian codes ${\cal C}(n,m,{\cal W})$ with $n=3$:
Proofs of Theorems \ref{theorem:affine_type_permutation}
and~\ref{thm:n=3,singleton-weight-set}}
\label{sec:n=3}

We now prove Theorems \ref{theorem:affine_type_permutation} and
\ref{thm:n=3,singleton-weight-set}. Towards that end, we first show the following
lemma, which covers the case of $n=3$ and $m\leq 3$.

\begin{lemma}
\label{lemma:m<=3,n=3}
    For $m \leq 3$, it holds that $\AUT(\mathcal{C}(3, m, \mathcal{W}))
    = S_3 \wr S_m$, except in the boundary cases (described in \Cref{note:trivial-codes-case}) where
    $\AUT(\mathcal{C}(3, m, \mathcal{W})) = S_{3^m}$.
\end{lemma}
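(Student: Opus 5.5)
Since $\mathcal{C}(3,m,\mathcal{W})^\perp=\mathcal{C}(3,m,\overline{\mathcal{W}})$, the two codes have the same automorphism group, and the complement of a boundary case is again a boundary case, I would consider $\mathcal{W}$ only up to complementation. The lower bound $S_3\wr S_m\le \AUT(\mathcal{C}(3,m,\mathcal{W}))$ comes directly from \Cref{lemma:WRprodalwaysinAUT}. The boundary cases give $S_{3^m}$ by \Cref{note:trivial-codes-case}. The argument in \Cref{subsec:Completingproofofthmn>=5} classifying all binary codes with $\AUT=S_N$ does not use $n\ge 5$, so it shows that no non-boundary $\mathcal{W}$ can have automorphism group $S_{3^m}$. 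For $m=1$ every $\mathcal{W}\subseteq\{0,1\}$ is a boundary case, and the corollary after \Cref{lemma:WRprodalwaysinAUT} already gives $S_3=S_3\wr S_1$. So the real content is: for $m\in\{2,3\}$ and non-boundary $\mathcal{W}$, there is no group $G$ with $S_3\wr S_m<G\le \AUT(\mathcal{C})$.

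The main step is to list the overgroups of $S_3\wr S_m$ in $S_{3^m}$. Since $S_3$ is primitive and not regular on $[3]$, the product action of $S_3\wr S_m$ is primitive. Hence every overgroup $G$ is a primitive group of degree $9$ or $27$, and these are fully classified. For degree $9$ the primitive groups are the subgroups of $\AGL(2,3)$, $\mathrm{PSL}(2,8)$, $\mathrm{P\Gamma L}(2,8)$, $A_9$ and $S_9$. Because $S_3\wr S_2$ has order $72$, which does not divide $504$, and contains odd permutations (by \Cref{lemma:odd-perm-exists}), the only proper overgroups other than $S_9$ lie in $\{\AGL(2,3)\}$ or $\mathrm{P\Gamma L}(2,8)$. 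I would exclude $\mathrm{P\Gamma L}(2,8)$ by comparing Sylow-$2$ subgroups: $\mathrm{P\Gamma L}(2,8)$ has order $1512=2^3\cdot 189$, while $S_3\wr S_2$ contains a $2$-group of order $8$ that is not elementary abelian. If this is awkward, I would fall back on the structure of the point stabilizer. For degree $27$ I would use the Liebeck--Praeger--Saxl list of maximal subgroups, as is done later for $m\ge 4$, to see that the only relevant overgroups are contained in $\AGL(3,3)$. In both cases every intermediate group $G$ is then $\AGL(m,3)$ or a subgroup of it strictly containing $S_3\wr S_m$. Concretely, such a $G$ contains the translations and signed permutation matrices, together with some non-monomial linear map. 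I would pin down these intermediate groups by looking at the $\mathbb{F}_3$-linear parts: the subgroups of $\mathrm{GL}(m,3)$ containing the monomial group $\{\pm1\}\wr S_m$.

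Finally, for each of the few non-boundary $\mathcal{W}$ up to complementation (a handful when $m=2$ and when $m=3$), I would show that a representative non-monomial linear map does not preserve the code. The natural choice is a shear $\vect{x}\mapsto(x_1+x_2,x_2,\dots)$, together with representatives of any other intermediate groups. The test is to exhibit one codeword, for instance a tensor-product minimum-weight codeword of the relevant $\DB_3(w,m)\cap\B_3(w-1,m)$ summands, whose image has a nonzero DFT coefficient at a frequency of weight outside $\mathcal{W}$. Equivalently, I would check that the DFT support, a union of Hamming-weight classes in $\mathbb{F}_3^m$ up to the $\pm$ identification of conjugate frequencies, is not stable under the transposed linear map. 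Weight classes are not invariant under shears (e.g.\ $(1,0)\mapsto(1,1)$), so the check should be a short computation.

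I expect the main obstacle to be the degree-$27$ overgroup enumeration. If that becomes unwieldy, the honest fallback is a direct computer verification (e.g.\ in GAP or Magma) of $\AUT(\mathcal{C}(3,m,\mathcal{W}))$ for all $2^{m+1}\le 16$ weight sets, which is entirely feasible at lengths $9$ and $27$.
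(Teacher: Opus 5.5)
Your route is genuinely different from the paper's and can be completed, but as written it has a false step at degree $9$ and a real hole at degree $27$.

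\textbf{Degree $9$.} Since $504=7\cdot 72$, the divisibility argument against $\mathrm{PSL}(2,8)$ is wrong. Use parity instead. $\mathrm{PSL}(2,8)$ is simple, and the field automorphism of $\mathrm{P\Gamma L}(2,8)$ fixes $0,1,\infty$ and is a product of two $3$-cycles. So both groups lie in $A_9$ and are excluded by \Cref{lemma:odd-perm-exists}.

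\textbf{Degree $27$.} You cannot reuse the $m\ge 4$ argument. The paper restricts it to $m\ge 4$ precisely because Guralnick's list contains $\mathrm{PSU}(4,2)$ with a subgroup of index $27$. Hence $\mathrm{PSU}(4,2)$ and $\mathrm{PSU}(4,2){:}2\cong W(E_6)$, acting on the $27$ lines of a cubic surface, are almost simple primitive groups of degree $27$. Their orders, $20$ and $40$ times $|S_3\wr S_3|=1296$, do not exclude them, so you need a permutation-level argument. One option: both lie in $A_{27}$, since a reflection of $W(E_6)$ swaps the two halves of a double-six and is a product of six transpositions, so \Cref{lemma:odd-perm-exists} applies. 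Another: their subdegrees are $1,10,16$, whereas the suborbits of $S_3\wr S_3$ are the Hamming spheres of sizes $1,6,12,8$, and $10$ is not a sum of these.

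\textbf{Affine overgroups.} You also need to justify that an affine overgroup lies in the \emph{standard} $\AGL(m,3)$. For $m\le 3$ the $3$-part of $|S_3\wr S_m|$ exceeds that of $|\mathrm{GL}(m,3)|$, so $S_3\wr S_m$ meets the socle $N$ nontrivially. That intersection is normal in $S_3\wr S_m$, hence transitive by \Cref{lemma:normal-subgroup-of-primitive-group-is-transitive}. This forces $N$ to be the translation group.

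\textbf{Linear parts.} This step closes cleanly, but a shear alone does not suffice for $m=2$.
For $m=3$, $\mathcal{M}_3=O(3,3)=\{\pm I\}\times SO(3,3)$, and $SO(3,3)\cong S_4$ is maximal in $\mathrm{SL}(3,3)$. So the only proper overgroups of $S_3\wr S_3$ are $\AGL(3,3)$ and $S_{27}$, and a shear check settles all six representatives $\{1\},\{2\},\{3\},\{0,1\},\{0,2\},\{0,3\}$.
For $m=2$, the proper overgroups are $A\widetilde{O}(2,3)\cong 3^2{:}SD_{16}$, $\AGL(2,3)$ and $S_9$. The first has a $2$-group as linear part, so it contains no shear. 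You need an element such as $\begin{pmatrix}1&1\\1&-1\end{pmatrix}$, which satisfies $A^TA=-I$. It exchanges weight-$1$ and weight-$2$ frequencies, so it breaks $\{1\}$ and $\{2\}$.

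\textbf{Comparison with the paper.} The paper uses no primitive-group classification here.
\begin{itemize}
\item It gets $\{0,\dots,r\}$ and its complement from Theorem~\ref{thm:maintheoremdualberman}.
\item It reduces $\{1\}$ to $\{0,1\}$ via $\AUT(\mathcal{C}_1)\cap\AUT(\mathcal{C}_2)\le\AUT(\mathcal{C}_1\oplus\mathcal{C}_2)$ together with $\AUT(\mathcal{C}(3,m,\{0\}))=S_{3^m}$.
\item It reduces $\{0,w\}$ to $\{w\}$ by the odd-length parity argument.
\item It checks $\mathcal{C}(3,3,\{2\})$ by computer.
\end{itemize}
Once repaired, your route replaces that computer check with a proof and determines every overgroup of $S_3\wr S_m$. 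The cost is dependence on Guralnick's theorem and on the subgroup structure of $\mathrm{GL}(2,3)$ and $\mathrm{SL}(3,3)$.
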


\begin{proof}
    If $m=1$, then all possible cases are boundary cases, and hence the statement of the lemma holds true (\Cref{note:trivial-codes-case}). 

    For $m = 2$, we have $S_3 \wr S_2
    \stackrel{(a)}{=} \AUT(\mathcal{C}(3, 2, \{0,1\})) = \AUT(\mathcal{C}(3, 2, \{0\}) \oplus \mathcal{C}(3, 2, \{1\})) \stackrel{(b)}{\ge} \AUT(\mathcal{C}(3, 2, \{0\})) \cap
    \AUT(\mathcal{C}(3, 2, \{1\})) \stackrel{(c)}{=} \AUT(\mathcal{C}(3, 2, \{1\})) \stackrel{(d)}{\ge} S_3 \wr
    S_2$, where $(a)$ holds because of \Cref{thm:maintheoremdualberman}, $(b)$ holds because of the intersection argument (as in proof of \Cref{lemma:WRprodalwaysinAUT}), $(c)$ holds because $\AUT(\mathcal{C}(3, 2, \{0\})) = S_{3^2}$ (\Cref{note:trivial-codes-case}), and $(d)$ holds because of \Cref{lemma:WRprodalwaysinAUT}.
    Thus, $\AUT(\mathcal{C}(3, 2, \{1\})) = S_3 \wr S_2$. Since
    $\AUT(\mathcal{C}(3, 2, \{2\}))$ is a Berman code,
    $\AUT(\mathcal{C}(3, 2, \{2\})) = S_3 \wr S_2$, by Theorem \ref{thm:maintheoremdualberman}. Now, for any ${\cal W}\subseteq\{0,1,2\}$, we see that the set ${\cal W}$ or $\overline{\cal W}$ falls within the cases already discussed above, or is a boundary case in which $\AUT({\cal C}(3,2,{\cal W})=S_{3^2}$ (\Cref{note:trivial-codes-case}). 
    
    Now, for $m = 3$, using similar arguments as in the case of $m=2$, we have $\AUT(\mathcal{C}(3, 3, \{1\})) = \AUT(\mathcal{C}(3, 3, \{3\})) = S_3 \wr S_3$. It can be verified algorithmically that $\AUT(\mathcal{C}(3, 3, \{2\})) = S_3 \wr S_3$ (verification code is available at \cite{github_automorphism_abelian_codes}). Since for $1 \le w \le 3$, $\mathcal{C}(3, 3, \{w\})$ is a subcode of the single parity check code (see Note \ref{note:trivial-codes-case}), it contains only even-weight vectors. Furthermore, because the length of the code is odd, the vectors in $\mathcal{C}(3, 3, \{0, w\}) \backslash \mathcal{C}(3, 3, \{w\})$ have strictly odd weight. As automorphisms preserve Hamming weight, any automorphism of ${\cal C} (3,3,\{0,w\})$ must map ${\cal C} (3,3,\{w\})$ to itself, i.e., $\AUT({\cal C}(3,3,\{0,w\})) \le \AUT({\cal C}(3,3,\{w\})) = S_3 \wr S_3$. Using \Cref{lemma:WRprodalwaysinAUT}, $\AUT({\cal C}(3,3,\{0,w\})) = S_3 \wr S_3$.  Now, for any ${\cal W}\subseteq\{0,1,2,3\}$, we see that the set ${\cal W}$ or $\overline{\cal W}$ falls within the cases already discussed above, or is a boundary case in which $\AUT({\cal C}(3,3,{\cal W})=S_{3^3}$ (\Cref{note:trivial-codes-case}). 
\end{proof}

The proof of \Cref{theorem:affine_type_permutation} relies on the following characterization of maximal subgroups of the symmetric group.
\begin{lemma}[O'Nan-Scott Theorem \cite{LIEBECK1987365}]
\label{lemma:O'Nan-Scott-Theorem}
    If $G$ is one of the maximal subgroups of $S_N$, and $G \neq A_N$, then $G$
    must satisfy one of the following cases:
    \begin{enumerate}
        \item \textit{Intransitive case}: $G = S_M \times S_K$, with $N = M+K$
        and $M \neq K$
        \item \textit{Imprimitive case}: $G = S_M \wr S_K$, with $N = MK$, $M > 1$,
        and $K > 1$
        \item \textit{Affine case}: $G = \AGL(K, p)$, with $N = p^K$ and $p$ is a
        prime
        \item \textit{Group of Diagonal Type}
        \item \textit{Product Action case:} $G = S_M \wr S_K$, with $N = M^K$,
        $M \ge 5$ and $K \ge 1$
        \item \textit{Almost Simple case:} $G$ is an almost simple group acting
        primitively.
    \end{enumerate}
\end{lemma}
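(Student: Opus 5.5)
The lemma is the classical O'Nan--Scott theorem together with the inclusion analysis of Liebeck, Praeger and Saxl, and in this paper it is quoted from \cite{LIEBECK1987365} rather than proved. If I were to prove it, I would split on how $G$ acts on $\Omega=[N]$. At each stage I would first exhibit a named overgroup of $G$ and then let maximality force equality.

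\emph{Intransitive and imprimitive cases.} Suppose $G$ is intransitive, and let $\Delta$ be a union of orbits with $|\Delta|=M$ and $K=N-M$. Then $G\le S_M\times S_K$, and maximality gives equality. If $M=K$, then $S_M\times S_M<S_M\wr S_2$, which contradicts maximality, so $M\neq K$. Now suppose $G$ is transitive but imprimitive. A nontrivial block $\Delta$ has $G$-translates that partition $\Omega$ into $K$ blocks of size $M$, with $M,K>1$. Hence $G\le S_M\wr S_K$ in its imprimitive action, and again maximality gives equality.

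\emph{Primitive case: the socle.} Here I would let $\mathrm{soc}(G)$ be the product of the minimal normal subgroups. By Lemma~\ref{lemma:normal-subgroup-of-primitive-group-is-transitive}, each minimal normal subgroup acts transitively. A standard centraliser argument then shows there are at most two minimal normal subgroups, and if there are two they are regular and centralise each other. Write $\mathrm{soc}(G)\cong T^k$ with $T$ simple.
\begin{itemize}
\item If $T$ is abelian, then $T\cong\mathbb{Z}_p$ and $T^k$ is regular. So $N=p^k$, and $G$ lies in the normaliser of a regular elementary abelian group, which is $\AGL(k,p)$. This is the affine case.
\item If $T$ is nonabelian and $k=1$, then $T\unlhd G\le\AUT(T)$, since $C_G(T)=1$. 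This is the almost simple case.
\item If $T$ is nonabelian and $k\ge 2$, I would study a point stabiliser $T^k_\alpha$ together with its projections $\pi_i(T^k_\alpha)\le T$. If every projection equals $T$, then Scott's lemma on subdirect products says $T^k_\alpha$ is a product of full diagonals. Primitivity then forces diagonal type, with $N=|T|^{l-1}$. If the projections are proper subgroups $R<T$, the primitivity of the induced action gives $G\le H\wr S_K$, where $H\le S_M$ is primitive (with socle $T^{k/K}$) and $N=M^K$. Hence $G\le S_M\wr S_K$ in product action. If $T^k_\alpha=1$, then $T^k$ is regular; this is the twisted wreath type.
\end{itemize}

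\emph{Discarding non-maximal types.} For the last step I would show three things.
\begin{itemize}
\item Twisted wreath groups, and the regular nonabelian socle cases, always embed in a larger product action or diagonal group, so they are never maximal.
\item When $M\le 4$, $S_M$ is solvable, so $S_M\wr S_K$ in product action preserves an affine structure on $[M]^K$ and lies inside $\AGL(K,2)$ or $\AGL(K,3)$. This forces $M\ge5$ in the product action case.
\item In each remaining case, $G$ equals the named overgroup, or $G$ contains $A_N$, which maximality and $G\ne A_N$ exclude.
\end{itemize}

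\emph{Main obstacle.} I expect the hard part to be the nonabelian socle analysis with $k\ge 2$. This needs Scott's lemma, careful bookkeeping of how $G$ permutes the $k$ simple factors, and the Liebeck--Praeger--Saxl determination of which of the resulting groups are genuinely maximal rather than contained in one another. For the purposes of the paper, the only point that matters downstream is that every maximal subgroup falls into one of the listed types.
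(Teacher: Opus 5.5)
The paper gives no proof of this lemma. It is imported from \cite{LIEBECK1987365}, and the only part used later (in the proof of Theorem~\ref{theorem:affine_type_permutation}) is the necessity direction: every maximal subgroup other than $A_N$ is of one of the listed types. Your framing agrees with this. The sketch you add is the classical O'Nan--Scott reduction, which is a different route from the paper's bare citation: exhibit a listed overgroup that is a proper subgroup of $S_N$, then let maximality force equality. That strategy alone suffices for the necessity direction, so your ``main obstacle'' overstates what is needed. The Liebeck--Praeger--Saxl inclusion analysis decides which listed groups are actually maximal, i.e.\ the converse, and neither the statement nor the paper uses it.

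Within the sketch, three steps need repair.

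(i) When all projections of $T^k_\alpha$ equal $T$, primitivity does \emph{not} force diagonal type. Scott's lemma gives full diagonals over a $G$-invariant partition of the $k$ factors into $K$ blocks of size $l$. For $K\ge 2$ you get compound diagonal groups; an example is $D\wr S_2$ in product action on $|T|^2$ points, with $D=T^2.(\operatorname{Out}(T)\times S_2)$ of diagonal type and degree $|T|$. These must be sent to the product action case via $G\le D\wr S_K\le S_M\wr S_K$ with $M=|T|^{l-1}$.

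(ii) In the branch with proper projections $R_i<T$, the missing idea is the identity $T^k_\alpha=R_1\times\cdots\times R_k$. To get it, note that $G_\alpha$ normalises $\prod_i R_i$. By the modular law, $G_\alpha\prod_i R_i\cap T^k=\prod_i R_i<T^k$, so $G_\alpha\prod_i R_i\neq G$. Maximality of $G_\alpha$ then gives $\prod_i R_i\le G_\alpha$. Consequently $\Omega\cong\prod_i T/R_i$, with coordinates indexed by the $k$ simple factors (so $K=k$, not blocks with socle $T^{k/K}$). Also $M=|T:R|\ge 5$ automatically. The case $R=1$ is exactly the regular (twisted wreath) case, which lies in $\operatorname{Hol}(T^k)=\operatorname{Hol}(T)\wr S_k\le S_{|T|}\wr S_k$ and needs no separate treatment.

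(iii) Since $S_4\cong\AGL(2,2)$, we have $S_4\wr S_K\le\AGL(2K,2)$, not $\AGL(K,2)$. In any case $M\ge 5$ is automatic once the socle is nonabelian, because a nonabelian simple group has no faithful transitive action on at most $4$ points.
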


Lemmas \ref{lemma:WreathProductMaximality} and \ref{lemma:odd-perm-exists} establish that for odd $n \ge 5$, $S_n \wr S_m$ is a maximal subgroup of $S_{n^m}$. However, this maximality does not extend to the $n=3$ case, as detailed in the following observation.
\begin{observation}
\label{obs:monomial}
A matrix $M$ is said to be a \textit{monomial} matrix if it can be written as $M=
PD$ for some permutation matrix $P$ and an invertible diagonal matrix $D$. Let
$\mathcal{M}_m$ be the set of $m \times m$ monomial matrices over $\mathbb{F}_3$. We observe that any
permutation in $S_3$ can be written as $\sigma (u) = au + b$, where $a, b \in
\mathbb{F}_3$, and $a \neq 0$. Extending this idea, we can see that any permutation
in $S_3 \wr S_m$ can be written as $\sigma (\vect{i}) = A \vect{i} + \vect{b}$,
where $A \in \mathcal{M}_m$.
\end{observation}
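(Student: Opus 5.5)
The statement is an observation with two parts. First, every permutation of $\mathbb{F}_3$ is an affine map $u \mapsto au+b$ with $a \neq 0$. Second, consequently every element of $S_3 \wr S_m$, in its product action on $[3]^m$ identified with $\mathbb{F}_3^m$, has the form $\vect{i} \mapsto A\vect{i}+\vect{b}$ with $A \in \mathcal{M}_m$.

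For the first part I would use a counting argument. There are $2$ choices of $a \in \{1,2\}$ and $3$ choices of $b$, giving $6$ maps $u \mapsto au+b$. Each is a bijection of $\mathbb{F}_3$, since its inverse is $u \mapsto a^{-1}(u-b)$. Distinct pairs $(a,b)$ give distinct maps, because the values at $u=0$ and $u=1$ recover $b$ and $a+b$. So these are $6$ distinct elements of $S_3$, and since $|S_3| = 6$ they exhaust $S_3$. In short, $S_3 = \AGL(1,3)$.

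For the second part I would take a general element $((\sigma_1,\dots,\sigma_m),\pi)$ and apply the product action defined in \Cref{sec:group_theory}. It sends $\vect{x}$ to the vector whose $k$-th coordinate is $\sigma_k(x_{\pi^{-1}(k)})$. By the first part, write $\sigma_k(u) = a_k u + b_k$ with $a_k \neq 0$. The $k$-th coordinate of the image is then $a_k x_{\pi^{-1}(k)} + b_k$.

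Now let $P$ be the permutation matrix with $(P\vect{x})_k = x_{\pi^{-1}(k)}$, let $D' = \mathrm{diag}(a_1,\dots,a_m)$, and let $\vect{b} = (b_1,\dots,b_m)^T$. The map is then $\vect{x} \mapsto D'P\vect{x} + \vect{b}$.

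The definition of monomial matrix asks for the form $PD$ rather than $DP$, so this needs one extra step. Conjugating a diagonal matrix by a permutation matrix gives another diagonal matrix, so $D'P = P(P^{-1}D'P) = PD$ with $D = P^{-1}D'P$ invertible and diagonal. Hence $A = D'P \in \mathcal{M}_m$, and the map is affine with linear part in $\mathcal{M}_m$.

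For context, this shows $S_3 \wr S_m \le \AGL(m,3)$, which is presumably why the observation notes that maximality fails for $n=3$. For $m \ge 2$ the containment is proper: $\mathcal{M}_m$ has order $2^m m!$, strictly smaller than $|\mathrm{GL}(m,3)|$. This order comparison is not needed for the observation itself.

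There is no real obstacle here. The only points needing care are matching the wreath product's index convention ($\pi^{-1}$) to the permutation matrix, and the $DP$ versus $PD$ ordering, which the conjugation remark resolves.
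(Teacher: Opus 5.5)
Your proof is correct and follows the same idea the paper sketches: identify $S_3$ with $\operatorname{AGL}(1,3)$, apply it coordinate-wise, and let the coordinate permutation $\pi$ supply the permutation matrix. The paper leaves the details implicit, and you fill them in: the counting argument giving all of $S_3$, the $\pi^{-1}$ index convention, and the reordering $D'P = P(P^{-1}D'P)$ to match the $PD$ form.
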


We now complete the proof of \Cref{theorem:affine_type_permutation} using Lemmas \ref{lemma:normal-subgroup-of-primitive-group-is-transitive}, \ref{lemma:odd-perm-exists} -- \ref{lemma:O'Nan-Scott-Theorem}, and \Cref{obs:monomial}.

\subsection{Proof of Theorem \ref{theorem:affine_type_permutation}}
\label{subsec:proofoftheoremaffine}

Following arguments in Subsection \ref{subsec:Completingproofofthmn>=5}, we see
that $\AUT(\mathcal{C}(3, m, \mathcal{W}))=S_{3^m}$ if and only if ${\cal W}$ is
one of the boundary cases in \Cref{note:trivial-codes-case}.

We henceforth consider the scenario when ${\cal W}$ is not a boundary case. To
prove Theorem \ref{theorem:affine_type_permutation}, we must show
$\AUT(\mathcal{C}(3, m, \mathcal{W})) \le \operatorname{AGL}(m, 3)$. Now, by the
structure of the wreath product subgroup $S_3 \wr S_m$, it is true that $S_3 \wr
S_m \leq \AGL(m, 3)$. Thus, by using Lemma \ref{lemma:m<=3,n=3} the claim holds
for $m\leq 3$.

We now look at the situation when $m \geq 4$. For convenience, let
$G=\AUT(\mathcal{C}(3, m, \mathcal{W}))$. Note that $G$ is a strict subgroup of
$S_{3^m}$, as ${\cal W}$ is not a boundary case. Further, by
\Cref{lemma:odd-perm-exists}, $G$ is not a subgroup of $A_{3^m}$, which is a
maximal subgroup of $S_{3^m}$. Hence, $G$ must be a subgroup of one of the other
maximal subgroups of $S_{3^m}$, listed in \Cref{lemma:O'Nan-Scott-Theorem}.

We proceed with the elaborate arguments. Note that
$G$ contains the product action of $S_3 \wr S_m$, which is known to be a
primitive transitive group \cite[Lemma 2.7A]{dixon1996permutation}. Hence, $G$ is a
transitive primitive group and must be a subgroup of some primitive maximal subgroup
of $S_{3^m}$, i.e. using \Cref{lemma:O'Nan-Scott-Theorem} $G$ is a subgroup of one
of the following
\begin{enumerate}
    \item $\operatorname{AGL}(m, 3)$
    \item A group of diagonal type
    \item An almost simple group.
\end{enumerate}
Now, we show that the possibilities 2) and 3) cannot occur, thus completing the
proof of Theorem \ref{theorem:affine_type_permutation}.

The degree of groups of diagonal type
is $|T|^{l-1}$ for some non abelian simple group $T$ and some integer $l$
\cite[Theorem 4.1A]{dixon1996permutation}. Since in this case the degree of the
group is $3^m$, $|T|$ must be a power of $3$. If $T$ is simple, that would mean
that the only normal strict subgroup it has is the trivial group $\{e\}$. By the
Feit-Thompson theorem \cite{feit1963solvability}, every group of odd order is
solvable. By definition of a solvable group, this implies that the quotient group
$T/\{e\},$ which is the same as $T$, must be abelian, leading to a contradiction with the definition of group of diagonal type. Thus,
there cannot be any group of diagonal type with degree $3^m$, so possibility 2) cannot happen.

Now consider that $G$ is contained in some almost simple group $G^*$. Thus, there
exists some non-abelian simple group $S$ such that $S \triangleleft G^*$, as per
Definition \ref{defn:almostsimple}. Since $G^*$ acts primitively on $[3]^m$, by
\Cref{lemma:normal-subgroup-of-primitive-group-is-transitive}, $S$ acts transitively
on $[3]^m$. Let $H$ be the stabilizer of any point $\alpha \in [3]^m$ under the action of $S$. By the orbit stabilizer theorem $|S| = |H| \cdot |S(\alpha)|$. Since $S$ acts
transitively, $|S(\alpha)| = 3^m$ i.e. $[S:H] = 3^m$. Subgroups of a simple group
with their index being a prime-power have been characterized in
\cite{GURALNICK1983304}. Using \cite[Theorem ~1]{GURALNICK1983304} it can be seen
that the only possible case where $[S:H] = 3^m$ with $m \ge 4$ is $S = A_{3^m}$.
If $S = A_{3^m}$, then $\AUT(S) = S_{3^m}$
\cite[Theorem ~2.3]{wilson2009finite}. By \Cref{defn:almostsimple} $A_{3^m} <
G^{*} \leq S_{3^m}$. But $A_{3^m}$ is a maximal subgroup of $S_{3^m}$, so $G^* =
S_{3^m}$. Therefore, $G$ is not a subgroup of an almost simple group that is a
proper subgroup of $S_{3^m}$, negating possibility 3).

Thus, the only remaining possibility is 1), i.e., $G \le \operatorname{AGL}(m,3)$.
This completes the proof of Theorem \ref{theorem:affine_type_permutation}.

\subsection{Some useful lemmas towards proving Theorem \ref{thm:n=3,singleton-weight-set}}

By Theorem \ref{theorem:affine_type_permutation}, any automorphism of ${\cal
C}(3,m,{\cal W})$ can be written as a map $\sigma:\vect{i}\to A\vect{i} +
\vect{b}, \forall \vect{i}\in{\mathbb F}_3^m$, for some invertible matrix $A$ in
${\mathbb F}_3^{m\times m}$, and a vector $\vect{b}\in {\mathbb F}_3^m$. This
enables us to analyze the automorphisms of ${\cal C}(3,m,{\cal W})$ (for $m\geq 4$)
in the transform domain, which we illustrate now.

\begin{lemma}
\label{lemma:weightpreservingAx+b}
Consider an invertible matrix $A\in{\mathbb F}_3^{m\times m}$ and vector
$\vect{b}\in{\mathbb F}_3^m$. Let $\sigma$ be a map defined as $\sigma:\vect{i}\to
A\vect{i} + \vect{b}, \forall \vect{i}\in{\mathbb F}_3^m$. Then $\sigma$ defines
an automorphism of ${\cal C}(3,m,{\cal W})$ if and only if $\Ham{(A^T\vect{x})}
\in \mathcal{W}$ for all $\vect{x}$ with $\Ham{(\vect{x})} \in \mathcal{W}$.
\end{lemma}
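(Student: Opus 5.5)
We will work entirely in the transform domain. Over $\mathbb{F}_2$ with $n=3$, the DFT uses a primitive cube root of unity $\omega\in\mathbb{F}_4$: writing a codeword as $\vect{a}=(a_{\vect{i}})_{\vect{i}\in\mathbb{F}_3^m}$, we have $\widehat{a}_{\vect{j}}=\sum_{\vect{i}}a_{\vect{i}}\,\omega^{\vect{j}^T\vect{i}}$, where the exponent is computed in $\mathbb{F}_3$ (this is well defined since $\omega^3=1$). The DFT is invertible because $3$ is odd. The plan is to compute how the permuted word $\sigma(\vect{a})$, defined by $(\sigma\vect{a})_{\sigma(\vect{i})}=a_{\vect{i}}$, transforms.

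\textbf{Step 1: transform of the permuted word.} Substituting $\vect{k}=A\vect{i}+\vect{b}$ gives
\[
\widehat{(\sigma\vect{a})}_{\vect{j}}=\sum_{\vect{i}}a_{\vect{i}}\,\omega^{\vect{j}^T(A\vect{i}+\vect{b})}=\omega^{\vect{j}^T\vect{b}}\,\widehat{a}_{A^T\vect{j}}.
\]
Since $\omega^{\vect{j}^T\vect{b}}\neq 0$, the support of $\widehat{\sigma\vect{a}}$ is exactly $\{\vect{j}:A^T\vect{j}\in\supp(\widehat{\vect{a}})\}=(A^T)^{-1}\supp(\widehat{\vect{a}})$.

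\textbf{Step 2: sufficiency.} Let $Z=\{\vect{j}:\Ham(\vect{j})\in\mathcal{W}\}$, so $\mathcal{C}(3,m,\mathcal{W})$ consists of the words whose DFT support lies in $Z$. Assume $A^T$ maps $Z$ into $Z$. Because $A^T$ is a bijection and $Z$ is finite, this forces $A^TZ=Z$, and hence $(A^T)^{-1}Z=Z$. By Step 1, if $\supp(\widehat{\vect{a}})\subseteq Z$ then $\supp(\widehat{\sigma\vect{a}})\subseteq Z$. So $\sigma$ maps the code into itself, and bijectivity gives $\sigma(\mathcal{C})=\mathcal{C}$.

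\textbf{Step 3: necessity.} This is the main obstacle, since we need codewords whose DFT support is small enough to detect a single frequency being moved. Suppose some $\vect{x}\in Z$ has $A^T\vect{x}\notin Z$. Set $\vect{y}=(A^T)^{-1}\vect{x}$; we know $\vect{y}\in Z$ because we are assuming $\sigma$ is an automorphism and $\sigma^{-1}$ is also one. It suffices to exhibit a codeword whose DFT support contains $\vect{x}$ and stays inside $Z$; its image under $\sigma$ then has $A^T\vect{x}$-related support leaving $Z$ (equivalently, apply the argument to $\sigma^{-1}$).

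To build such a codeword, use the cyclotomic (Frobenius) structure of binary words. A word $\vect{a}$ has binary entries exactly when $\widehat{a}_{2\vect{j}}=\widehat{a}_{\vect{j}}^2$. Over $\mathbb{F}_3$ we have $2\vect{j}=-\vect{j}$, so the cyclotomic orbits are $\{\vect{j},-\vect{j}\}$, and $\Ham(-\vect{j})=\Ham(\vect{j})$. Therefore $Z$ is a union of such orbits, and so is $A^T Z$. For any orbit $\{\vect{x},-\vect{x}\}$, inverse-transform the spectrum equal to $1$ at $\pm\vect{x}$ and $0$ elsewhere. This yields a binary codeword whose spectral support is exactly $\{\pm\vect{x}\}$: the formula $\vect{i}\mapsto\omega^{\vect{x}^T\vect{i}}+\omega^{-\vect{x}^T\vect{i}}$ lies in $\mathbb{F}_2$, up to the scaling factor $3^{-m}=1$.

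Apply this with $\vect{x}$ chosen so that $(A^T)^{-1}\vect{x}\notin Z$, which is available whenever $A^TZ\neq Z$. By Step 1 the image $\sigma\vect{a}$ has spectral support $\{\pm(A^T)^{-1}\vect{x}\}\not\subseteq Z$, so $\sigma$ is not an automorphism. The condition on $A^T$ in the statement is equivalent to $A^TZ=Z$, which is equivalent to $(A^T)^{-1}Z=Z$, so both directions match and the proof is complete.
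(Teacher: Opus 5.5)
Your proof is correct and follows the paper's route: the identity $\widehat{(\sigma\vect{a})}_{\vect{j}}=\omega^{\vect{j}^T\vect{b}}\,\widehat{a}_{A^T\vect{j}}$ reduces the question to whether $A^T$ preserves the set $Z$ of frequencies with weight in $\mathcal{W}$, and sufficiency then follows from bijectivity exactly as in the paper. For necessity, the paper simply reads the conclusion off this identity, whereas your binary words with spectral support exactly $\{\pm\vect{x}\}$ supply witness codewords that make that step rigorous; only $\vect{x}\neq\vect{0}$ is ever needed, which matters because your explicit formula vanishes at $\vect{x}=\vect{0}$, and you should drop the circular remark that $\vect{y}\in Z$ ``because $\sigma$ is an automorphism'', which your final contrapositive paragraph never uses.
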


\begin{proof}
Suppose that $\sigma$ defined by $A$ and $\vect{b}$ is an automorphism of ${\cal
C}(3,m,{\cal W})$. Consider a codeword $\vect{a} \in \mathcal{C}(3, m, \mathcal{W})$,
and let $\vect{a}^{\prime} = (a_{A\vect{i} + \vect{b}}: \vect{i} \in
\mathbb{Z}_3^m)$ be the permuted codeword as per the automorphism $\sigma$. Then
the DFT vector $\hat{{a}}^\prime_{\vect{j}} = \sum_{\vect{i} \in \mathbb{Z}_3^m}
a^\prime_{\vect{i}} \alpha ^{\vect{i}.\vect{j}} = \sum_{\vect{i} \in
\mathbb{Z}_3^m} a_{A^{-1}(\vect{i}-\vect{b})} \alpha ^{\vect{i}.\vect{j}} =
\sum_{\vect{i} \in \mathbb{Z}_3^m} a_{\vect{i}} \alpha ^{(A\vect{i} + \vect{b}).\vect{j}}
= \alpha^{\vect{b}.\vect{j}}\sum_{\vect{i} \in \mathbb{Z}_3^m} a_{\vect{i}}
\alpha ^{\vect{i}.(A^T\vect{j})}
= \alpha^{\vect{b}.\vect{j}} \hat{a}_{A^T\vect{j}}$. Thus,
$\hat{{a}}^\prime_{\vect{j}} = 0$ if and only if $\hat{a}_{A^T\vect{j}} = 0$.
Thus, it must be the case that for all $\vect{x} \in \mathbb{F}_3^m$ such that
$\Ham{(\vect{x})} \in \mathcal{W}$, we must have $\Ham{(A^T\vect{x})} \in
\mathcal{W}$. To prove the converse consider an arbitrary $\sigma: \vect{i}
\rightarrow A \vect{i} + \vect{b}$ where $A$ is invertible and for all $\vect{x}
\in \mathbb{F}_3^m$ with $\Ham{(\vect{x})} \in \mathcal{W}$, $\Ham{(A^T\vect{x})}
\in \mathcal{W}$. Let $\vect{a}$ be a codeword, and $\sigma(\vect{a}) =
\vect{a}^{\prime}$. Let $\pi: \vect{j} \rightarrow A^T \vect{j}$. Since $A^T$ is
full rank, $\pi$ is bijective. Since $\pi(\mathcal{W}) = \mathcal{W}$,
$\pi\left(\overline{\mathcal{W}}\right) = \overline{\mathcal{W}}$
i.e., for any $\vect{j} \in \mathbb{F}_3^m$, $\Ham(\vect{j}) \notin \mathcal{W}$
if and only if $\Ham(A^T\vect{j}) \notin \mathcal{W}$. Thus, $\vect{a}^{\prime}$
is a codeword, i.e., $\sigma$ is an automorphism.
\end{proof}

Thus, following \Cref{lemma:weightpreservingAx+b}, the problem of finding
automorphisms of ${\cal C}(3,m,{\cal W})$ reduces to finding invertible matrices
$A$ such that the map $\vect{x}\to A^T\vect{x}$ ensures
$\Ham(A^T\vect{x})\in{\cal W}$ if $\Ham(\vect{x})\in{\cal W}$.

The following result shows that the affine transformations do not completely lie
within $\AUT({\cal C}(3,m,{\cal W}))$.

\begin{lemma}
\label{lemma:aglNOTFULLYinAUTn=3}
    For any non-boundary set $\mathcal{W}$ (as described in \Cref{note:trivial-codes-case}),
    $\AGL(m, 3) \nleq \AUT(\mathcal{C}\left({3, m, \mathcal{W}}\right))$.
\end{lemma}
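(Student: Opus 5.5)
By \Cref{lemma:weightpreservingAx+b}, it suffices to exhibit a single invertible $A\in{\mathbb F}_3^{m\times m}$ such that the map $\vect{x}\mapsto A^T\vect{x}$ sends some $\vect{x}$ with $\Ham(\vect{x})\in\mathcal{W}$ to a vector whose weight lies outside $\mathcal{W}$. The map $\vect{i}\mapsto A\vect{i}$ is then an element of $\AGL(m,3)$ that is not an automorphism of $\mathcal{C}(3,m,\mathcal{W})$. Since $A$ ranges over all of $\operatorname{GL}(m,3)$, so does $A^T$. The task therefore reduces to a purely combinatorial one: find $B\in\operatorname{GL}(m,3)$ and $\vect{x}$ with $\Ham(\vect{x})\in\mathcal{W}$ and $\Ham(B\vect{x})\notin\mathcal{W}$.

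The key observation is that $\operatorname{GL}(m,3)$ acts transitively on the nonzero vectors of ${\mathbb F}_3^m$. For any two nonzero $\vect{x},\vect{y}$, extend each to a basis and let $B$ map one basis to the other; then $B\vect{x}=\vect{y}$. Weight $0$ is attained only by $\vect{0}$, which every linear map fixes, and every weight $w\in\{1,\dots,m\}$ is attained by some nonzero vector, for instance $\sum_{k=1}^{w}\vect{e}_k$.

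The argument is then a case split on $\mathcal{W}'=\mathcal{W}\setminus\{0\}$. Since $\mathcal{W}$ is not a boundary case, $\mathcal{W}'$ is neither $\emptyset$ (which would give $\mathcal{W}\in\{\emptyset,\{0\}\}$) nor $\{1,\dots,m\}$ (which would give the single parity check code or the full space). So there exist $w\in\mathcal{W}'$ and $w'\in\{1,\dots,m\}\setminus\mathcal{W}$. Choose $\vect{x}=\sum_{k=1}^{w}\vect{e}_k$ and $\vect{y}=\sum_{k=1}^{w'}\vect{e}_k$, both nonzero, and take $B\in\operatorname{GL}(m,3)$ with $B\vect{x}=\vect{y}$. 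Then $\Ham(\vect{x})=w\in\mathcal{W}$ while $\Ham(B\vect{x})=w'\notin\mathcal{W}$. Setting $A=B^T$, the converse direction of \Cref{lemma:weightpreservingAx+b} fails, so $\vect{i}\mapsto A\vect{i}$ is not in $\AUT(\mathcal{C}(3,m,\mathcal{W}))$.

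There is no real obstacle here. The only points needing care are that the boundary-case exclusion is exactly what guarantees both $w$ and $w'$ can be taken nonzero, and that the DFT-domain map in \Cref{lemma:weightpreservingAx+b} involves $A^T$ rather than $A$, which is harmless because transposition is a bijection on $\operatorname{GL}(m,3)$.
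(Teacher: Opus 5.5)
Your proof is correct, and it takes a different, more uniform route than the paper's.

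\paragraph{The paper's route.} The paper splits into two branches. When $\mathcal{C}(3,m,\mathcal{W})$ is a Berman or dual Berman code, or $m\le 3$, it appeals to Theorem~\ref{thm:maintheoremdualberman} and the $m\le 3$ lemma to get $\AUT=S_3\wr S_m$. The $m\le 3$ lemma partly rests on a computer check, and $S_3\wr S_m$ contains only monomial affine maps. Otherwise the paper picks $w\in\mathcal{W}\setminus\{0\}$ with $w+1\notin\mathcal{W}$. It then uses the transvection $I_m+E_{kj}$ to raise a weight-$w$ vector to weight $w+1$ and concludes via \Cref{lemma:weightpreservingAx+b}.

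\paragraph{Your route.} You replace both branches with the single fact that $\operatorname{GL}(m,3)$ is transitive on nonzero vectors. Any $w\in\mathcal{W}\setminus\{0\}$ and any $w'\in\{1,\dots,m\}\setminus\mathcal{W}$ then suffice. The non-boundary hypothesis is used exactly once, to guarantee that both exist.

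\paragraph{What each buys.} Your version makes the lemma independent of Theorem~\ref{thm:maintheoremdualberman} and of the computational verification. It also covers a case that the paper's second branch does not actually handle. Take $\mathcal{W}=\{0\}\cup\{r+1,\dots,m\}$ with $1\le r\le m-1$ and $m\ge 4$; this set is neither Berman nor dual Berman. The only $w\in\mathcal{W}\setminus\{0\}$ with $w+1\notin\mathcal{W}$ is $w=m$, so the paper's step ``since $w<m$, some $u_k=0$'' is unavailable. That gap can be repaired by passing to the dual weight set $\overline{\mathcal{W}}$, or by using a weight-decreasing transvection. What the paper's route gives in return is a fully explicit witness that changes the weight by exactly one.

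\paragraph{Wording.} You are using the ``only if'' direction of \Cref{lemma:weightpreservingAx+b} (automorphism implies the weight condition) in contrapositive form. That is not its converse direction, as your last paragraph says.
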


\begin{proof}
If the given code ${\cal C}(3,m,{\cal W})$ is a Berman or a dual Berman code, then
by Theorem \ref{thm:maintheoremdualberman} its automorphism group is $S_3\wr S_m$.
If $m=1$ then only boundary cases are possible, and if $2 \le m \le 3$, then in any
non-boundary case, by Lemma \ref{lemma:m<=3,n=3}, the automorphism group is $S_3
\wr S_m$. Now, $S_3\wr S_m$ does not contain all the affine transformations for
$m\geq 2$. For instance, any transformation of the form $\vect{i}\to
A\vect{i}+\vect{b},\forall \vect{i}\in{\mathbb F}_3^m$, where
$A\in{\mathbb F}_3^{m\times m}$ is not a monomial matrix, does not lie in
$S_3\wr S_m$. Thus, the lemma holds in this case.

Now, consider the other scenario, i.e., ${\cal W}$ is a non-boundary case, which
is also not a Berman or dual Berman code. Thus, there exists some $w \in
\mathcal{W} \backslash \{0\}$ such that $w + 1 \notin \mathcal{W}$.

Let $\vect{u}=(u_1,\hdots,u_m) \in \mathbb{Z}_3^m$ be such that $\Ham(\vect{u}) =
w$. Since $w < m$, there exists an index $k \in \{1,\hdots,m\}$ such that $u_k =
0$. Furthermore, since $w \geq 1$, there exists some index $j$ such that $u_j \neq
0$. Consider a matrix in ${\mathbb F}_3^{m\times m}$, given as $A = I_m + E_{kj}$
where $I_m$ is the $m \times m$ identity matrix and $E_{kj}$ is the matrix with a
$1$ at row $k$, column $j$ and zeros elsewhere. Note that $\operatorname{det}(A) =
1$, hence $A$ is invertible. Therefore, $A$ defines an affine transformation
$\sigma$ in ${\mathbb F}_3^m$, given by $\sigma:\vect{i}\to
(A^T)\vect{i},~\vect{i}\in{\mathbb F}_3^m$. Now, consider the vector $\vect{v} =
(v_1,\hdots,v_m)=A\vect{u}$. We easily see the following:
\[
v_i = \begin{cases}
u_i, & \text{if } i \neq k, \\
u_k + u_j, & \text{if } i = k.
\end{cases}
\]
Since $u_k = 0$ and $u_j \neq 0$, it follows that $\Ham{(\vect{v})} = w + 1
\notin \mathcal{W}$. From \Cref{lemma:weightpreservingAx+b}, we therefore see that
the affine transformation defined by $\sigma$ is not in the automorphism group
$\AUT(\mathcal{C}\left({3, m, \mathcal{W}}\right))$. Thus, $\AGL(m, 3) \nleq \AUT(\mathcal{C}\left({3, m,
\mathcal{W}}\right))$ and the proof is complete.
\end{proof}

Theorem \ref{theorem:affine_type_permutation} and Lemma
\ref{lemma:aglNOTFULLYinAUTn=3} guarantee that $\AUT({\cal C}(3,m,{\cal W}))$ is
a proper subgroup of $\AGL(m,3)$, for any non-boundary set ${\cal W}$. Towards
identifying this subgroup, we prove a few more lemmas.

Now, define $H_{2^l} \triangleq \begin{pmatrix} 1 &1 \\ 1 &-1
\end{pmatrix}^{\otimes l}\in{\mathbb F}_3^{2^l}.$ The following lemma captures the
generators of the groups $O(m,3)$. We say a group $G = \left\langle H_1, H_2,
\dots, H_{p}, g_1, g_2, \dots, g_q \right\rangle$, where $H_1, H_2, \dots, H_p$
are subsets of $G$ and $g_1, g_2, \dots, g_q$ are elements in $G$, if
$\left(\bigcup_{i \in [p]}H_i\right) \cup \left(\bigcup_{i \in [q]}
\{g_i\}\right)$ is a generating set of $G$.

\begin{lemma}
    \label{lemma:Om3-generators}
    For $m \le 3$, $O(m, 3) = \left\langle\mathcal{M}_m\right\rangle$, and for
    $m \ge 4$, $O(m, 3) = \left\langle\mathcal{M}_m, \operatorname{diag}(H_4,
    I_{m-4})\right\rangle,$ where $\operatorname{diag}(A,B)$ refers to the block
    diagonal matrix with its block diagonals being occupied by the square matrices
    $A$ and $B$.
\end{lemma}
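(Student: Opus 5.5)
The plan is to prove the two inclusions separately. For the reverse inclusion I would use the Cartan--Dieudonné theorem, which applies because the form $\vect{x}^T\vect{y}$ on $\mathbb{F}_3^m$ is nondegenerate and the characteristic is not $2$. It gives that $O(m,3)$ is generated by reflections, so it suffices to show every reflection lies in the group generated by the stated matrices.

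\emph{Easy inclusion.} Let $M=PD\in\mathcal{M}_m$. Every nonzero element of $\mathbb{F}_3$ is $\pm1$, so $D^2=I$. Hence
\[
M^TM = DP^TPD = D^2 = I,
\]
and $\mathcal{M}_m\subseteq O(m,3)$. For $m\ge 4$, write $H_2=\begin{pmatrix}1&1\\1&-1\end{pmatrix}$. Then $H_2^TH_2 = 2I_2=-I_2$ over $\mathbb{F}_3$, so
\[
H_4^TH_4=(H_2^TH_2)\otimes(H_2^TH_2)=(-I_2)\otimes(-I_2)=I_4 .
\]
Therefore $\operatorname{diag}(H_4,I_{m-4})\in O(m,3)$. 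Let $G$ denote the group generated by the stated set; we have shown $G\le O(m,3)$.

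\emph{Reverse inclusion via reflections.} Call $\vect{v}$ anisotropic if $\vect{v}^T\vect{v}\neq 0$, which over $\mathbb{F}_3$ means $\Ham(\vect{v})\not\equiv 0 \pmod 3$. For such $\vect{v}$, the reflection is
\[
s_{\vect{v}}(\vect{x})=\vect{x}-2\,\frac{\vect{x}^T\vect{v}}{\vect{v}^T\vect{v}}\,\vect{v}.
\]
By Cartan--Dieudonné these reflections generate $O(m,3)$. For any $g\in O(m,3)$ we have $s_{g\vect{v}}=g\,s_{\vect{v}}\,g^{-1}$. So if $g\in G$, then $s_{\vect{v}}\in G$ if and only if $s_{g\vect{v}}\in G$. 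It therefore suffices to move each anisotropic $\vect{v}$, by an element of $G$, to one whose reflection is visibly in $G$.

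\emph{Normalising $\vect{v}$ with monomials.} Using a permutation matrix and sign changes from $\mathcal{M}_m$, I may assume $\vect{v}=(1,\dots,1,0,\dots,0)^T$ with $w=\Ham(\vect{v})$ ones.
\begin{itemize}
\item If $w=1$, then $s_{\vect{v}}=\operatorname{diag}(-1,1,\dots,1)\in\mathcal{M}_m$.
\item If $w=2$, then $\vect{v}^T\vect{v}=2=-1$, so $s_{\vect{v}}(\vect{x})=\vect{x}-(\vect{x}^T\vect{v})\vect{v}$. This sends $\vect{e}_1\mapsto-\vect{e}_2$, $\vect{e}_2\mapsto-\vect{e}_1$, and fixes all other $\vect{e}_i$, so it is monomial.
\item If $w\ge 4$, I reduce the weight using $\operatorname{diag}(H_4,I_{m-4})$. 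The first row of $H_4$ is all ones and every other row has two $+1$'s and two $-1$'s, so $H_4(1,1,1,1)^T=(4,0,0,0)^T=(1,0,0,0)^T$. Thus $\operatorname{diag}(H_4,I_{m-4})$ maps $\vect{v}$ to a vector of weight $w-3$, which is still anisotropic. Inducting on $w$ reduces everything to $w\in\{1,2\}$.
\end{itemize}
For $m\le 3$, an anisotropic vector has weight $1$ or $2$, since weight $3$ is isotropic. So there every reflection is already conjugate to a monomial one, and $O(m,3)=\langle\mathcal{M}_m\rangle$. This also shows why $H_4$ is needed exactly from $m=4$ on.

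\emph{Main obstacle.} The main point needing care is invoking Cartan--Dieudonné correctly over $\mathbb{F}_3$. It requires a nondegenerate symmetric bilinear form in characteristic $\neq 2$, which holds here. If a self-contained argument is preferred, I would instead induct on $m$: given $A\in O(m,3)$, use a reflection, which lies in $G$ by the steps above, to send $A\vect{e}_1$ back to $\vect{e}_1$. This is possible because $A\vect{e}_1$ and $\vect{e}_1$ are both anisotropic with equal norm, so either $\vect{u}-\vect{e}_1$ or $\vect{u}+\vect{e}_1$ is anisotropic. The result then fixes $\vect{e}_1$ and restricts to an element of $O(m-1,3)$ on $\vect{e}_1^\perp$.
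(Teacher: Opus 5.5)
Your proposal is correct and takes essentially the same route as the paper. Both arguments invoke Cartan--Dieudonn\'e to reduce to reflections, use monomial conjugation to normalise the reflecting vector to $(1,\dots,1,0,\dots,0)^T$, check weights $1$ and $2$ directly, and induct on weight using $\operatorname{diag}(H_4,I_{m-4})$ to lower it by $3$; you also check explicitly (via $H_2^TH_2=-I_2$) that $\operatorname{diag}(H_4,I_{m-4})\in O(m,3)$, a containment the paper uses without verifying.
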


\begin{proof}
Firstly, we note that ${\cal M}_m\subseteq O(m, q)$. The inclusion ${\cal
M}_m\subseteq O(m, q)$ is because, for any monomial matrix $PD$ ($P$ being a
permutation matrix, and $D$ being an invertible diagonal matrix), we have $D^2=I_m$ (in ${\mathbb F}_3$) and hence
$PDD^TP^T=PD^2P^T=PP^T=I$.

Define $\mathcal{R} \triangleq \left\{R_{\vect{v}} = I_m +
\frac{\vect{v}\vect{v}^T}{\vect{v}^T\vect{v}}: \forall \vect{v}\in{\mathbb F}_3^m
~\text{s.t}~\vect{v}^T\vect{v} \neq 0\right\}$ (${\cal R}$ is said to be the set
of \textit{reflections}). By Cartan-Dieudonne theorem \cite[Chapter
1.7]{QuadraticFormsLam2005}, $O(m, 3) = \langle \mathcal{R} \rangle$. Now, consider
an arbitrary $R_{\vect{v}} \in \mathcal{R}$. Using induction we show that
$R_{\vect{v}}$ can be obtained via compositions of the collection of generators
given in the statement. Without loss of generality, $\vect{v} = (1\,, \dots \, ,
1\,, 0, \, \dots, 0)^T$. If $\vect{v}$ doesn't have this form, then we can always find $\vect{v}^{\prime}
= M \vect{v}$ where $M$ is a monomial matrix and $\vect{v}^{\prime}$ has such a
structure. In this case, $R_{\vect{v}} = M^TR_{\vect{v}^{\prime}}M$. Let
$\Ham(\vect{v}) = k$. We now complete the proof by case-by-case analysis on $k$.

\textit{Case 1, $k = 1$:} $\vect{v} = (1,0,\cdots,0)^T$ so, $R_{\vect{v}} = I +
\frac{\vect{v}\vect{v}^T}{\vect{v}^T\vect{v}} = \operatorname{diag}(-1, 1, 1,
\dots, 1) \in \mathcal{M}_m$.

\textit{Case 2, $k = 2$:} $\vect{v} = (1,1,0,\cdots,0)^T$ so, $R_{\vect{v}} = I + \frac{\vect{v}\vect{v}^T}{\vect{v}^T\vect{v}} =
\operatorname{diag}(P_2, I_{m-2})$, where $P_2=-\begin{pmatrix}0&1\\1&0
\end{pmatrix}$. It is easy to see that $R_{\vect{v}} \in \mathcal{M}_m$, in this
case too.

\textit{Case 3, $k = 3$:} This isn't possible since $\vect{v}^T\vect{v} = 0$.
Note that for $m \le 3$ these are the only cases possible.

Case 4, $k \ge 4$: Assume that for any $\vect{u}$ such that $\vect{u}^T\vect{u}
\neq 0$ and $\Ham(\vect{u}) < k$, $R_{\vect{u}}$ can be generated by the given
generating set. Let $\vect{v} = (1 \,, 1 \,, 1\,, 1 \, | \, \vect{u}^T)^T$. It
is easy to verify that $\vect{w} = \operatorname{diag}(H_4, I_{m-4}) \vect{v} = (1
\,, 0 \,, 0\,, 0 \, | \, \vect{u}^T)^T$. Moreover, $\vect{w}^T\vect{w} =
\vect{v}^T\vect{v}$, and $\Ham(\vect{w}) < k$. By our assumption $R_{\vect{w}}$
can be generated using the given set of generators. $R_{\vect{v}} =
\operatorname{diag}(H_4, I_{m-4}) R_{\vect{w}} \operatorname{diag}(H_4,
I_{m-4})$. Thus, $R_{\vect{v}}$ can be generated using the given generator set.

Hence, for $m \le 3$, $O(m, 3) = \left\langle\mathcal{M}_m\right\rangle$, and for
$m \ge 4$, $O(m, 3) = \left\langle\mathcal{M}_m, \operatorname{diag}(H_4,
I_{m-4})\right\rangle$.
\end{proof}

The following lemma characterizes the generating group of $\widetilde{O}(m,3)$ for $m\geq 4$.

\begin{lemma}
    \label{tilde-Om3-generators}
    Let $m\geq 4$. If $m$ is odd, then $\widetilde{O}(m, 3) = O(m, 3)$. Else, if $m$ is even, $\widetilde{O}(m,
    3) = \left\langle O(m,3), I_{m/2} \otimes H_2\right\rangle$.
\end{lemma}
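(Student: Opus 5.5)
Every $A\in\widetilde{O}(m,3)$ has a multiplier $\lambda\in\mathbb{F}_3^*=\{1,-1\}$ with $A^TA=\lambda I_m$. The map $A\mapsto\lambda$ is a group homomorphism $\widetilde{O}(m,3)\to\mathbb{F}_3^*$, because $(AB)^T(AB)=B^T(\lambda_A I)B=\lambda_A\lambda_B I$. Its kernel is exactly $O(m,3)$. Hence $O(m,3)$ has index at most $2$ in $\widetilde{O}(m,3)$. If we exhibit one element with $\lambda=-1$, that element together with $O(m,3)$ generates $\widetilde{O}(m,3)$; if no such element exists, the two groups are equal. So the whole proof reduces to deciding when $\lambda=-1$ can occur.

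\emph{Odd $m$.} Take determinants in $A^TA=-I_m$ to get $\det(A)^2=(-1)^m=-1$. In $\mathbb{F}_3$ every nonzero square equals $1$, so this is impossible. Thus $\lambda=1$ for every $A\in\widetilde{O}(m,3)$, and $\widetilde{O}(m,3)=O(m,3)$.

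\emph{Even $m$.} Let $K=I_{m/2}\otimes H_2$, a block diagonal matrix with $m/2$ copies of $H_2=\begin{pmatrix}1&1\\1&-1\end{pmatrix}$. Since $H_2^T=H_2$, we have $H_2^TH_2=H_2^2=\begin{pmatrix}2&0\\0&2\end{pmatrix}=-I_2$ in $\mathbb{F}_3$. Therefore $K^TK=-I_m$, so $K\in\widetilde{O}(m,3)$ with multiplier $-1$.

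Now take any $A\in\widetilde{O}(m,3)$ with $\lambda=-1$. Since $K^{-1}=-K^T$, the product $B=AK^{-1}$ satisfies
\[
B^TB=(K^{-1})^TA^TAK^{-1}=-(K^{-1})^TK^{-1}=-(KK^T)^{-1}.
\]
Because $K^TK=-I$, also $KK^T=-I$, so $B^TB=I$. Hence $A=BK$ with $B\in O(m,3)$. This proves $\widetilde{O}(m,3)=\langle O(m,3),K\rangle$.

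Nothing here is a real obstacle; the key observations are that nonzero squares in $\mathbb{F}_3$ equal $1$ and that $H_2^2=-I_2$. The assumption $m\ge4$ plays no role in this argument.
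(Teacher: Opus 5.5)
Your proof is correct and follows essentially the same route as the paper. For odd $m$, the determinant argument rules out the multiplier $-1$; for even $m$, you factor any multiplier-$(-1)$ element as an element of $O(m,3)$ times $K=I_{m/2}\otimes H_2$. The paper writes this factorization as $B=A(-AB)$, using $A^2=-I$, while you write $A=(AK^{-1})K$.

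Your multiplier-homomorphism framing is a small bonus: since its kernel is $O(m,3)$, it also gives directly the index-$2$ fact that the paper proves separately in Observation~\ref{obs:maximality-of-O-in-tilde-O}.
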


\begin{proof}
We start with the case of odd $m$. Let $A \in \widetilde{O}(m,3)$. Since we are
operating over $\mathbb{F}_3$, $A^TA = \pm I_m$. Now, if $A^TA = -I$, then
$\operatorname{det}(A^T)\operatorname{det}(A) = \operatorname{det}(-I_m)$, i.e.,
$\left(\operatorname{det}(A)\right)^2 = (-1)^m = -1$ (odd $m$). But $-1$ is not a square in $\mathbb{F}_3$.
Thus, if $A \in \widetilde{O}(m, 3)$, then $A^TA = I_m$. Hence, $O(m, 3) =
\widetilde{O}(m, 3)$.

Now, we will look at the case of even $m$. We have at least one matrix $A =
I_{\frac{m}{2}} \otimes H_2$ such that $A^TA = -I_m$. So for even $m$, $O(m, 3)
< \widetilde{O}(m, 3)$. Consider an arbitrary matrix $B \in \widetilde{O}(m, 3) \backslash O(m, 3)$. Now, $B = A (-AB)$. But,
$(-AB)^T(-AB) = B^TA^TAB = -B^TB = I_m$. Thus, $-AB\in O(m,3)$. Thus, for even $m$, $\widetilde{O}(m, 3) = \left\langle
O(m,3), I_{m/2} \otimes H_2\right\rangle$.
\end{proof}

\begin{observation}
\label{obs:maximality-of-O-in-tilde-O}
For even $m \ge 4$, let $A, B \in \widetilde{O}(m,3) \backslash O(m,3)$, and let
$M = A^{-1}B = -A^TB$ (using $-A^TA = I$).
$M^TM = (-B^TA)(-A^TB) = B^TAA^TB = -B^TB = I$. So, $M \in O(m,3)$. This shows
that there are exactly two cosets of $O(m,3)$ in $\widetilde{O}(m,3)$, i.e.,
$[\widetilde{O}(m,3): O(m,3)] = 2$. By Lagrange's theorem, $O(m,3)$ is a maximal
subgroup of $\widetilde{O}(m,3)$.
\end{observation}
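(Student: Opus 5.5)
The plan is to get the index-$2$ statement from a homomorphism argument and then deduce maximality from Lagrange's theorem. Define the multiplier map $\mu:\widetilde{O}(m,3)\to\mathbb{F}_3^{*}=\{\pm 1\}$ by $\mu(A)=\lambda$ whenever $A^TA=\lambda I_m$. This is well defined because $\lambda I_m$ determines $\lambda$. The first step is to check that $\mu$ is a group homomorphism. For $A,B\in\widetilde{O}(m,3)$ we have $(AB)^T(AB)=B^T(A^TA)B=\mu(A)\,B^TB=\mu(A)\mu(B)I_m$. Hence $AB\in\widetilde{O}(m,3)$ and $\mu(AB)=\mu(A)\mu(B)$.

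The kernel of $\mu$ is exactly $\{A: A^TA=I_m\}=O(m,3)$. For surjectivity when $m$ is even, I would use the element $I_{m/2}\otimes H_2$ already exhibited in the proof of Lemma~\ref{tilde-Om3-generators}. Over $\mathbb{F}_3$ we have $H_2^TH_2=2I_2=-I_2$, and therefore
\[
(I_{m/2}\otimes H_2)^T(I_{m/2}\otimes H_2)=I_{m/2}\otimes(H_2^TH_2)=-I_m .
\]
So $\mu$ is onto $\{\pm1\}$. By the first isomorphism theorem, $\widetilde{O}(m,3)/O(m,3)\cong\{\pm1\}$, which gives $[\widetilde{O}(m,3):O(m,3)]=2$. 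This is the same conclusion as the direct coset computation $A^{-1}B\in O(m,3)$ for $A,B\notin O(m,3)$, but packaged as a homomorphism.

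For maximality, suppose $O(m,3)\le K\le\widetilde{O}(m,3)$. Applying Lagrange's theorem twice gives $2=[\widetilde{O}(m,3):K]\,[K:O(m,3)]$. Hence one of the two factors is $1$, so $K=O(m,3)$ or $K=\widetilde{O}(m,3)$, and $O(m,3)$ is a maximal subgroup.

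I do not expect a genuine obstacle here; the one point to be careful about is the use of characteristic $3$. The identity $2=-1$ makes $H_2$ a ``$-1$-similitude'', and surjectivity genuinely needs $m$ even. For odd $m$, Lemma~\ref{tilde-Om3-generators} shows that $-1$ cannot occur as a multiplier, since $\det(A)^2=(-1)^m$ and $-1$ is not a square in $\mathbb{F}_3$, which is why the observation is restricted to even $m$.
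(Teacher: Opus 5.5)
Your proof is correct and is essentially the paper's argument: the paper checks directly that $A^{-1}B\in O(m,3)$ for any $A,B\in\widetilde{O}(m,3)\setminus O(m,3)$, which amounts to your multiplier map $\mu$ having kernel $O(m,3)$, and your surjectivity witness $I_{m/2}\otimes H_2$ is the matrix from Lemma~\ref{tilde-Om3-generators}. Your version has the small advantage of making two steps explicit that the paper leaves implicit: that $\widetilde{O}(m,3)\setminus O(m,3)\neq\emptyset$, and the index-$2$-implies-maximal step via $2=[\widetilde{O}(m,3):K]\,[K:O(m,3)]$.
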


\subsection{Proof of Theorem \ref{thm:n=3,singleton-weight-set}}
\label{subsec:ProofSingletonWeightSet}

For $w = m$, the code is precisely a Berman code and we know the automorphism group
is $S_3\wr S_m$.

Now we will look at the case of $w < m$. Recall that by Theorem
\ref{theorem:affine_type_permutation} and Lemma \ref{lemma:aglNOTFULLYinAUTn=3},
$\AUT({\cal C}(3,m,\{w\}))$ is a proper subgroup of $\AGL(m,3)$. Let $A$ be an $m \times m$ matrix over
$\mathbb{F}_3$. By \Cref{lemma:weightpreservingAx+b}, if $A^T$ corresponds to an
automorphism of $C(3, m, \{w\})$, then for all $\vect{x} \in \mathbb{F}_3^m$ of
weight $w$, we must have $\Ham(A\vect{x}) = w$. We show that this leads to the
condition that $A\in \widetilde{O}(m,3)$. We observe that in $\mathbb{F}_3$, $1^2
= 1$, and $2^2 = 1$. Thus, the standard quadratic form $Q(\vect{x}) \triangleq
\vect{x}^T\vect{x}$ reduces to $Q(\vect{x})=\Ham(\vect{x}) \, \operatorname{mod}
\, 3$. Thus, if $A$ corresponds to an automorphism, we must thus have $Q(\vect{x})
= Q(A\vect{x})$ which should be equal to $\vect{x}^TA^TA\vect{x}$. Let $S =
A^TA$, which is clearly symmetric. Thus, the relationship obtained
$Q(\vect{x})=\vect{x}^T\vect{x} = \vect{x}^TS\vect{x}=Q(A\vect{x})$
becomes equivalent to $\sum_{i \in \supp(\vect{x})} x_i^2 = \sum_{i \in
\supp(\vect{x})} S_{i,i}x_i^2 + 2\sum_{i,j \in \supp(\vect{x}): \,
i<j}S_{i,j}x_ix_j$. By rearranging the equation and using the observation that for
non-zero entries in the vector $x_i^2 = 1$, we get
\begin{align}
    \label{eqn:one}
    \sum_{i \in \supp(\vect{x})} (S_{i,i}-1) - \sum_{i,j \in \supp(\vect{x}):
    \, i<j}S_{i,j}x_ix_j = 0.
\end{align}
This must hold for every vector in $\mathbb{F}_3^m$ of weight $w$.

Now let $k \in \supp({\vect{x}})$. Consider the vector $\vect{y}$ obtained by flipping the
sign of the $k^{\text{th}}$ coordinate of $\vect{x}$, i.e., set $y_k = -x_k$ and
$y_i = x_i$ for all $i \neq k$. Since $\vect{x}$ and $\vect{y}$ have the same support, we have the equation
\begin{align}
\nonumber
0&=\sum_{i \in \supp(\vect{x})} (S_{i,i}-1) - \sum_{i,j \in
\supp(\vect{x}): \, i<j}S_{i,j}y_iy_j\\
\nonumber
&= \sum_{i \in \supp(\vect{x})} (S_{i,i}-1) - \sum_{i,j \in
\supp(\vect{x})\backslash\{k\}: \, i<j}S_{i,j}x_ix_j \\
\label{eqn:two}
&~~~~~~~~~~~~~~~+ \sum_{i \in \supp(\vect{x})\backslash\{k\}} S_{i,k}x_ix_k.
\end{align}
Subtracting (\ref{eqn:two}) from (\ref{eqn:one}), we get $2\sum_{i \in
\supp(\vect{x}) \backslash \{k\}} S_{i,k}x_ix_k = 0$.

Since $2$ and $x_k$ are non-zero, we get the result $\sum_{i \in \supp(\vect{x})
\backslash \{k\}} S_{i,k}x_i = 0$. If at least one of the $S_{i,k}$ in the
summation is non-zero, then by the principle of deferred decision, the linear
equation has at most $2^{w-2}$ solutions (as it involves $w-1$ variables
$x_i:i\in\supp(\vect{x})$, which have to be non-zero). However, this equation
should hold for all possible $\vect{x}$ with a given support and thus must have
$2^{w-1}$ solutions. This contradiction implies that  all the $S_{i,k}$ in the equation are $0$. Since choice of support of
$\vect{x}$ and choice of $k$ are arbitrary, all off-diagonal elements of $S$ are
$0$. Thus, $\sum_{i \in \supp(\vect{x})} (S_{i,i}-1) = 0$. Now, with respect to
the $\vect{x}$ chosen before, consider some $\vect{z}$ such that $\supp(\vect{z})
= \left(\supp(\vect{x}) \backslash \{k\}\right) \bigcup \{k^\prime\}$, where
$k^\prime \notin \supp(\vect{x})$. We can construct such a vector because $w < m$.

Now, $\sum_{i \in \supp(\vect{x})} (S_{i,i}-1) - \sum_{i \in \supp(\vect{z})}
(S_{i,i}-1) = 0$. So, $S_{k,k} = S_{k^\prime, k^\prime}$. Hence, $S = \lambda I$
where $\lambda = 1$ or $\lambda = -1$. Thus, $A \in \widetilde{O}(m, 3)$. Now, if
$\vect{x}^T\vect{x} \neq 0$ and $A^TA = -I$, then $(A\vect{x})^T(A\vect{x}) =
-\vect{x}^T\vect{x}$.
Thus, if $3 \nmid w$, $A \in O(m,3)$.

For most choices of $m$, we now show that not all possible $A \in \widetilde{O}(m,
3)$ can correspond to weight-preserving automorphisms, thus narrowing the choices
for $A$ to some proper subgroup of $\widetilde{O}(m, 3)$. Consider the case $m \ge
6$. By Lemmas \ref{lemma:Om3-generators} and \ref{tilde-Om3-generators}, we have
that $\operatorname{diag}(H_4, I_{m-4})\in \widetilde{O}(m,3)$. Let $w$ be a
non-negative integer such that $w \leq m-3$. Then, we can construct a vector
$\vect{v} = (1,0,0,0|\vect{u}^T)^T$ of weight $w$. Now the weight of
$\operatorname{diag}(H_4, I_{m-4}) \vect{v}$ is $w + 3$. Similarly, for any $w >
m-3$, there exists a pair of vectors $\vect{v},\vect{v}'$ such that
$\Ham(\vect{v})=w-3,\Ham(\vect{v}')=w$ and $\vect{v}'=\operatorname{diag}(H_4,
I_{m-4})\vect{v}$. Thus, $\operatorname{diag}(H_4, I_{m-4})$ cannot be an
automorphism of ${\cal C}(3,m,\{w\})$.

This leaves us with the cases of $m = 4$, and $m = 5$. We will use an observation
from the argument in the case of $m \ge 6$. If $w+3 \le m$ or $w-3 > 0$ then there
is some sequence of monomial operators and $\operatorname{diag}(H_4, I_{m-4})$,
which will take a vector of weight $w$ to a vector of weight $w+3$ or $w-3$. If $m
= 4$, and $w \in \{1, 4\}$ then $\AUT(\mathcal{C}(3, m, \{w\}))$ does not include those permutations defined by the affine maps corresponding to
$\operatorname{diag}(H_4, I_{m-4})$, i.e. $\AUT(\mathcal{C}(3, m,
\{w\})) < AO(m,3)$.
Similarly, if $m = 5$, and $w \in \{1, 2, 4, 5\}$, then $\AUT(\mathcal{C}(3,
m, \{w\})) < AO(m,3)$.

We will now look at the exceptional cases. Consider the case of $m=4$. Let
$\vect{v}$ be a vector of weight $2$. Consider arbitrary $A \in O(m,3)$. As shown
in the first half of the proof $\Ham(A\vect{v}) \equiv w \, (\operatorname{mod} \,
3)$. Since there is no other possible non-zero weight $w^\prime \neq w$ such that
$w^\prime \equiv w (\operatorname{mod} 3)$, for any matrix $A \in O(m, 3)$
$\Ham\left(A\vect{v}\right) = w$. And since $2 \not\equiv 0 (\operatorname{mod} 3)$,
if $A^TA = -I$, $\Ham(A\vect{v}) \equiv \vect{v}^TA^TA\vect{v} \equiv
-\vect{v}^T\vect{v}
\not\equiv w (\operatorname{mod} \, 3)$
which implies, $\Ham(A\vect{v}) \neq \Ham(\vect{v})$. This shows that
$\AUT(\mathcal{C}(3,4,\{2\}))$ is a strict subgroup of $A\widetilde{O}(m,3)$ and contains $AO(m,3)$. Thus using \Cref{obs:maximality-of-O-in-tilde-O},
$\AUT(\mathcal{C}(3, 4, \{2\})) = AO(4,3)$.
Similarly, since there are no non-zero
numbers other than $3$ in the range $1$ to $4$ which are congruent to $3$ mod $3$,
and since $3 \equiv 0 (\operatorname{mod} 3)$,
$\AUT(\mathcal{C}(3, 4, \{3\})) = A\widetilde{O}(4,3)$.
Using the same argument and the fact
that for odd $m$ $O(m, 3) = \widetilde{O}(m, 3)$, we can say
$\AUT(\mathcal{C}(3, 5, \{3\})) = AO(5,3)$.

\section{Conclusion}
\label{sec:conclusion}

We identified the complete automorphism group of Berman codes and their duals.
Further, we presented the complete automorphism groups for the abelian codes $\mathcal{C}(n,
m, \mathcal{W})$ in the case of $n \ge 5$. For $n = 3$ and $m \ge 4$, we do not have a closed form expression of the automorphism group for all possible
weight sets. However, it appears possible to algorithmically compute generators of
the group by reduction to graph automorphism. Graph automorphism is a well-studied problem with worst case quasi-polynomial complexity \cite{MATHON1979131,
babai_graph_isomorphism}, and fast practical algorithms \cite{MCKAY201494}.
Further, future investigations into automorphism based decoders are clearly of
interest.

\bibliographystyle{IEEEtran}
\bibliography{references}

\newpage
\appendices
\section{Useful lemma regarding bijective functions}
\label{appendix:bijec_fn_lemma}

\begin{lemma}
\label{bijective-function-intersection}
Let $f: X \to Y$ be a bijective function. For any family of subsets $X_0, X_1,
\dots, X_{\ell-1} \subseteq X$,
\[
f\left(\bigcap_{i \in [\ell]} X_i\right) = \bigcap_{i \in [\ell]} f(X_i).
\]
\end{lemma}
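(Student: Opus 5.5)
The plan is to prove the two inclusions $f\left(\bigcap_{i\in[\ell]} X_i\right) \subseteq \bigcap_{i\in[\ell]} f(X_i)$ and $\bigcap_{i\in[\ell]} f(X_i) \subseteq f\left(\bigcap_{i\in[\ell]} X_i\right)$ separately, by chasing elements. Only the second inclusion uses a property of $f$, and that property is injectivity.

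For the forward inclusion, take $y \in f\left(\bigcap_i X_i\right)$. Then $y = f(x)$ for some $x$ lying in every $X_i$. Hence $y \in f(X_i)$ for each $i \in [\ell]$, and so $y \in \bigcap_i f(X_i)$. This direction holds for an arbitrary function $f$.

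For the reverse inclusion, take $y \in \bigcap_i f(X_i)$. For each $i$ there is some $x_i \in X_i$ with $f(x_i) = y$. Since $f$ is injective, all these $x_i$ equal a single element $x = f^{-1}(y)$. That common element lies in every $X_i$, so $x \in \bigcap_i X_i$, and therefore $y = f(x) \in f\left(\bigcap_i X_i\right)$.

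There is no real obstacle here. The one point worth stating explicitly is that injectivity is essential: a non-injective map can send distinct preimages in different $X_i$ to the same $y$, and then the reverse inclusion fails. I will also remark that the case $\ell = 0$ is not needed in Lemma \ref{lemma:lines-preserved}, since there $\ell = 2^{m-r-1} \ge 1$.
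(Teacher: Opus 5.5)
Your proof is correct and follows the paper's argument essentially line for line: you prove the forward inclusion by element chasing, valid for any $f$, and the reverse inclusion by using injectivity to collapse the preimages $x_i$ into one element of $\bigcap_{i} X_i$. Your side remark that $\ell = 2^{m-r-1} \ge 1$ in the application within Lemma~\ref{lemma:lines-preserved} is also accurate.
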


\begin{proof}
We prove the equality by showing inclusion in both directions.
\begin{enumerate}
    \item Let $y \in f\left(\bigcap_{i\in[\ell]} X_i\right)$. By definition, there
    exists some $x \in \bigcap_{i\in [\ell]} X_i$ such that $f(x) = y$. Since $x$
    belongs to the intersection, $x \in X_i$ for every $i \in [\ell]$.
    Consequently, $f(x) \in f(X_i)$ for every $i$, which implies $y \in
    \bigcap_{i\in [\ell]} f(X_i)$. Note that this direction holds for any function
    $f$.
    \item Let $y \in \bigcap_{i \in [\ell]} f(X_i)$. This implies that for each $i
    \in [\ell]$, there exists an element $x_i \in X_i$ such that $f(x_i) = y$.
    Since $f$ is bijective, it is specifically injective; thus, $f(x_1) = f(x_2) =
    \dots = f(x_{\ell}) = y$ implies that $x_1 = x_2 = \dots = x_{\ell}$. Let this
    unique element be $x$. Since $x$ is equal to each $x_i$, we have $x \in X_i$
    for all $i$, and thus $x \in \bigcap_{i \in [\ell]} X_i$. It follows that $y =
    f(x) \in f\left(\bigcap_{i \in [\ell]} X_i\right)$.
\end{enumerate}
Therefore, $f\left(\bigcap_{i \in [\ell]} X_i\right) = \bigcap_{i \in [\ell]}
f(X_i)$.
\end{proof}

\end{document}